\documentclass[aps,pra,preprint,notitlepage,superscriptaddress,nofootinbib,floatfix,showkeys]{revtex4-2}

\usepackage[utf8]{inputenc}
\usepackage[T1]{fontenc}
\usepackage{amsmath,amssymb,amsthm}
\usepackage{graphicx}

\newtheorem{theorem}{Theorem}
\newtheorem{proposition}{Proposition}
\newtheorem{lemma}{Lemma}

\newtheorem{corollary}{Corollary}
\newtheorem{definition}{Definition}
\newtheorem{assumption}{Assumption}
\newtheorem{remark}{Remark}

\newcommand{\Tr}{\mathrm{Tr}}
\newcommand{\E}{\mathbb{E}}
\newcommand{\Var}{\mathrm{Var}}
\newcommand{\Cov}{\mathrm{Cov}}
\newcommand{\MSE}{\mathrm{MSE}}
\newcommand{\argmin}{\operatorname*{arg\,min}}

\newcommand{\norm}[1]{\lVert #1 \rVert}
\newcommand{\ket}[1]{|#1\rangle}
\newcommand{\bra}[1]{\langle#1|}

\newcommand{\PEC}{\mathrm{PEC}}
\newcommand{\CDR}{\mathrm{CDR}}

\begin{document}

\title{Finite-shot operating windows for probabilistic error cancellation and Clifford data regression}

\author{Vicenzo Scavino}
\email{u201919346@upc.edu.pe}
\homepage[ORCID: ]{https://orcid.org/0009-0000-2472-9785}
\affiliation{Independent Researcher, Lima, Peru}

\date{June 19, 2026}

\begin{abstract}
Quantum error mitigation on noisy devices is limited not only by residual bias
but also by the shot noise and calibration errors introduced by the mitigation
procedure itself.  We derive finite-shot mean-square-error boundaries for
probabilistic error cancellation (PEC), Clifford data regression (CDR), and no
mitigation for noisy Pauli-observable estimates.  Exact PEC removes the target
bias under an exact noise inverse at the price of a quasi-probability variance
overhead, whereas population linear CDR can have smaller target-shot variance
but retains a calibration floor when the training and target noise responses do
not match.  This competition yields a finite CDR-dominant operating window whose
upper endpoint scales as $B_{\PEC=\CDR}(p)\propto 1/(\delta_1^2p)$, where
$\delta_1$ is the first-order CDR calibration mismatch.  We further prove a
target-response projection theorem showing that response-blind affine CDR
removes the first-order bias only when the target noise response is affine in
the ideal target value; otherwise a nonzero projection error gives an
irreducible local calibration floor.  The same mean-square-error formulation
extends to second-order calibration, commuting Pauli Hamiltonians, finite CDR
training shots, and residual PEC model bias.  A closed-form two-qubit
calculation and QAOA simulations support the predicted no-mitigation,
CDR-dominant, and PEC-dominant regimes.
\end{abstract}

\keywords{quantum error mitigation, probabilistic error cancellation, Clifford data regression, finite-shot sampling, Pauli observables, QAOA}

\maketitle

\section{Introduction}
\label{sec:introduction}

Quantum error mitigation (QEM) aims to estimate ideal expectation values from
noisy quantum circuits without the full overhead of fault-tolerant error
correction.  Probabilistic error cancellation (PEC) uses a quasi-probability
inverse of a noise channel and can be unbiased when the noise model is exact
\cite{Temme2017,Endo2018,Takagi2021,Piveteau2022,vandenBerg2023}.  Clifford data
regression (CDR) instead learns a classical correction map from classically
simulable training circuits \cite{Czarnik2021,Strikis2021,Lowe2021}.  Both
families can reduce bias, with finite-shot performance governed by variance
overhead, shot allocation, and model mismatch
\cite{Cai2023,Bultrini2023,Russo2023,Wang2024}.

We focus on the finite-shot comparison that follows from this bias--variance
tension:
\begin{quote}
\emph{For a given noise rate $p$ and shot budget $B$, should one use PEC, CDR, or
no mitigation?}
\end{quote}
This question has a sharp PEC-CDR tension.  PEC removes the noisy bias under an
exact noise inverse, but its variance is inflated by the quasi-probability
one-norm \cite{Temme2017,Endo2018,Takagi2021,Piveteau2022,vandenBerg2023}.
CDR can have a smaller target-shot variance overhead, while calibration mismatch
produces a misspecification bias that persists under increasing target shots
\cite{Czarnik2021,Strikis2021,Lowe2021,Bultrini2023}.

The main result is that this trade-off creates a finite region in which CDR is
MSE-optimal.  At small noise, and under the local assumptions stated below, CDR
occupies the dominant region between a lower help-harm threshold and an upper
PEC-CDR crossover,
\begin{equation}
    B_{\CDR}^*(p)<B<B_{\PEC=\CDR}(p),
    \qquad
    B_{\PEC=\CDR}(p)\sim
    \frac{\kappa_{\PEC}-\kappa_{\CDR}}{\delta_1^2}\frac{1}{p}.
\end{equation}
Here $\delta_1$ is the first-order CDR calibration mismatch.  Informally, it is
the first-order slope error made when the noise response learned from the CDR
training family is extrapolated to the target circuit.  The window grows as
calibration improves and collapses quadratically as mismatch increases.

\paragraph{Scope.}
The analysis is local in noise strength and specific to the stated estimator,
observable, calibration, and shot-budget assumptions.  It complements global
QEM cost and lower-bound results
\cite{Takagi2022,Takagi2023,Tsubouchi2023,Quek2024} by giving explicit
finite-shot PEC/CDR comparison boundaries for analytically tractable settings.
The CDR statements are relative to a fixed training distribution or to a
specified target-conditioned training design; training sets that satisfy the
calibration condition can remove the first-order mismatch.

\paragraph{Contributions.}
First, we derive a finite-shot MSE comparison for exact PEC, linear CDR, and no
mitigation, including an exact PEC help-harm boundary for Pauli observables and
the PEC-CDR crossover law controlled by $\delta_1$.

Second, we identify the CDR calibration mechanism that controls this crossover.
The target-response projection theorem gives a sharp condition for when
response-blind affine CDR can remove the first-order target bias and when a
projection floor remains.

Third, we test the resulting boundaries in a closed-form two-qubit example and
in QAOA simulations.  The supplementary analyses examine Hamiltonian
observables, graph instances, Pauli-channel variants, angle changes, near-zero
$\mu_0$, circuit depth, finite CDR training shots, and device-derived noise
models.

\subsection{Related work}
\label{subsec:related_work}

PEC and zero-noise extrapolation were introduced as early QEM strategies for
short-depth circuits \cite{Temme2017,Li2017,Endo2018}.  Later work developed
practical demonstrations, digital noise-scaling frameworks, Pauli-Lindblad
implementations, hybrid PEC--ZNE variants, quasi-probability decompositions
with reduced overhead, and resource-theoretic sampling analyses
\cite{Song2019,Kandala2019,GiurgicaTiron2020,Mari2021,Takagi2021,Piveteau2022,vandenBerg2023}.
Recent work further studies PEC sampling overheads and faster PEC constructions
\cite{Scheiber2025,Chen2025}, as well as non-Clifford gate settings
\cite{Layden2026}.  Against this PEC-construction
literature, the present analysis fixes the estimator class and asks when PEC
outperforms its sampling overhead.  It derives an explicit finite-shot
help-harm threshold for Pauli observables and uses that threshold as one side of
the PEC-CDR comparison boundary.

CDR and related learning-based methods train correction maps from simulable or
near-Clifford data
\cite{Czarnik2021,Strikis2021,Lowe2021,Czarnik2025,PerezGuijarro2024,Liao2024,Liao2025}.
Those works define and extend regression-based mitigation ansatzes.  The present
analysis takes a complementary population-level view: for a fixed linear CDR
map, it characterizes first-order target calibration and translates the resulting
calibration mismatch into a closed-form PEC-CDR crossover.  The
target-response projection theorem further identifies the irreducible
response-blind affine mismatch over a target ensemble.

The closest numerical and benchmarking studies show that QEM performance is
operating-point dependent: it can vary with shot budget, noise model, circuit
family, observable, and calibration quality
\cite{Qin2023,Bultrini2023,LaRose2022,Cirstoiu2023,Russo2023,Kim2023,Wang2024,Govia2025,Niroula2025}.
Recent uncertainty-quantification work studies finite-shot propagation and PEC
resource overheads \cite{Prodius2026}, while mitigation-aware benchmarking work
accounts for finite-shot effects in optimization benchmarks \cite{Demarty2026}.
Shot-estimation analyses for noisy circuits quantify the sampling variance and
shot requirements of expectation-value estimates
\cite{SeksariaPrabhakar2025}.  Together, this
literature supports operating-point-aware comparison of mitigation methods.
The distinction here is analytic: prior benchmarking work maps finite-shot
regimes empirically, while this paper derives closed-form finite-shot
indifference boundaries between PEC, CDR, and no mitigation.  In particular, it
makes the first-order CDR calibration mismatch $\delta_1$ the explicit quantity
controlling the upper endpoint of the CDR-dominant window.

Finally, the lower-bound statement for response-blind affine CDR follows
standard statistical decision-theory usage \cite{Berger1985}.  The theorem is a
sharp statement about the specified affine response-blind class; nonlinear,
adaptive, or target-informed mitigation designs lie outside its hypothesis
class.

\paragraph{Organization.}
Section~\ref{sec:framework} introduces the common MSE formulation.
Section~\ref{sec:pec_boundary} derives the exact PEC finite-shot boundary.
Section~\ref{sec:cdr_calibration} analyzes the CDR calibration floor and
summarizes the target-response projection mechanism, with full details in
Appendix~\ref{app:cdr_mathematical_details}.  Section~\ref{sec:phase_diagram} combines
PEC and CDR into finite-shot operating boundaries.  The remaining main-text
sections extend the analysis to commuting Hamiltonians, give the two-qubit
closed-form example, and report the QAOA numerical validation.

\section{Common MSE formulation}
\label{sec:framework}

Let $U$ prepare $\rho_0=U\ket{0}\bra{0}U^\dagger$ and let $O$ be the measured
observable.  The ideal and noisy expectation values are
\begin{equation}
    \mu_0=\Tr[O\rho_0],
    \qquad
    \mu_p=\Tr[O\mathcal{N}_p(\rho_0)],
\end{equation}
and the noisy bias is
\begin{equation}
    b_N(p)=\mu_p-\mu_0.
\end{equation}
With $B$ target shots,
\begin{equation}
    \MSE_N(p,B)=b_N^2(p)+\frac{\sigma_N^2(p)}{B}.
    \label{eq:mse_noisy}
\end{equation}
For a Pauli observable measured projectively, $\sigma_N^2(p)=1-\mu_p^2$.

For any mitigation method $M$ with
\begin{equation}
    \MSE_M(p,B)=b_M^2(p)+\frac{\sigma_M^2(p)}{B},
\end{equation}
define
\begin{equation}
    D_M(p)=b_N^2(p)-b_M^2(p),
    \qquad
    A_M(p)=\sigma_M^2(p)-\sigma_N^2(p).
\end{equation}
The finite-shot gain over the unmitigated estimator is
\begin{equation}
    \Delta_M(p,B):=\MSE_N(p,B)-\MSE_M(p,B)
    =D_M(p)-\frac{A_M(p)}{B}.
    \label{eq:Delta_general}
\end{equation}
Thus, in the common threshold case $D_M(p)>0$ and $A_M(p)>0$, mitigation helps
if and only if
\begin{equation}
    B> B_M^*(p):=\frac{A_M(p)}{D_M(p)}.
\end{equation}
The full sign table, including degenerate cases, is given in
Appendix~\ref{app:help_harm_cases}.

\begin{theorem}[Generic biased--unbiased operating window]
\label{thm:generic_bias_variance_window}
Consider two mitigation methods for the same target observable.  Method $U$ is
unbiased on the considered local range and has
\begin{equation}
    b_U(p)=0,\qquad \sigma_U^2(p)-\sigma_L^2(p)=\kappa p+O(p^2),
    \qquad \kappa>0,
\end{equation}
while method $L$ has lower leading variance but a local residual bias
\begin{equation}
    b_L(p)=\delta p+O(p^2),\qquad \delta\ne0.
\end{equation}
Then the indifference budget between the two methods satisfies
\begin{equation}
    B_{U=L}(p)
    =\frac{\sigma_U^2(p)-\sigma_L^2(p)}{b_L^2(p)-b_U^2(p)}
    =\frac{\kappa}{\delta^2}\frac{1}{p}+O(1).
    \label{eq:generic_unbiased_biased_crossover}
\end{equation}
Consequently, a biased lower-variance method can beat the unbiased method only
below a finite crossover budget, and this crossover collapses quadratically as
the first-order residual bias grows.  For sufficiently small $p>0$, this
positive crossover is unique and method $L$ has smaller MSE exactly for
$B<B_{U=L}(p)$.
\end{theorem}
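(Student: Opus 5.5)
The plan is to compare the two methods directly through the common MSE formulation of Section~\ref{sec:framework}, without going through the unmitigated estimator. Write
\[
\MSE_L(p,B)-\MSE_U(p,B)=\bigl(b_L^2(p)-b_U^2(p)\bigr)-\frac{\sigma_U^2(p)-\sigma_L^2(p)}{B}.
\]
This has exactly the form of Eq.~\eqref{eq:Delta_general}, with $U$ playing the role of the ``mitigated'' method and $L$ that of the reference. Hence the indifference budget is the ratio $B_{U=L}(p)=A(p)/D(p)$, where $A=\sigma_U^2-\sigma_L^2$ and $D=b_L^2-b_U^2$. This is valid whenever $D(p)\neq 0$, and it gives the first equality in Eq.~\eqref{eq:generic_unbiased_biased_crossover}.

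Next comes the local expansion. Since $b_U=0$, we have $D(p)=b_L^2(p)=(\delta p+O(p^2))^2=\delta^2p^2\bigl(1+O(p)\bigr)$. By hypothesis $A(p)=\kappa p+O(p^2)=\kappa p\bigl(1+O(p)\bigr)$. Dividing gives
\[
B_{U=L}(p)=\frac{\kappa}{\delta^2 p}\bigl(1+O(p)\bigr)\bigl(1+O(p)\bigr)^{-1}=\frac{\kappa}{\delta^2}\frac{1}{p}+O(1).
\]
The only care needed is in handling the reciprocal $(1+O(p))^{-1}=1+O(p)$. This requires $\delta\neq 0$ so that the denominator's leading term is nonvanishing, and it shows why the error term is $O(1)$ rather than smaller. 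The quadratic collapse in $\delta$ is then read off from the leading term.

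For uniqueness and the sign statement, I would fix $p$ small enough that both $A(p)>0$ and $D(p)>0$. This is possible because $\kappa>0$, $\delta\neq0$, and the $O(p^2)$ remainders are dominated by the linear and quadratic leading terms on some interval $(0,p_0)$. For such $p$, the map $B\mapsto \MSE_L-\MSE_U=D-A/B$ is strictly increasing on $B>0$, runs from $-\infty$ to $D>0$, and vanishes exactly at $B=A/D$. So the positive crossover is unique, and $\MSE_L<\MSE_U$ holds iff $B<B_{U=L}(p)$. It also follows that $B_{U=L}(p)$ is finite for each such $p$, which is the claim that $L$ wins only below a finite budget.

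The main obstacle is making the choice of $p_0$ rigorous. Both signs must hold simultaneously, and the $O(\cdot)$ remainders must be interpreted as genuine local bounds, $|r(p)|\le Cp^2$ for $p\le p_1$. I would take explicit constants from these remainder bounds and choose $p_0<\min\{\kappa/(2C_A),\,|\delta|/(2C_b),\,p_1\}$. That choice gives $A\ge \kappa p/2$ and $|b_L|\ge |\delta|p/2$ on $(0,p_0)$, which suffices for everything above.
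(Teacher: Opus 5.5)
Your proposal is correct and follows essentially the same route as the paper. Both arguments write the MSE difference as $D-A/B$ and equate it to zero to get $B_{U=L}=A/D$. Both then expand $A=\kappa p+O(p^2)$ and $b_L^2=\delta^2p^2+O(p^3)$, and use positivity of both terms for small $p$ together with strict monotonicity in $B$ to obtain a unique crossing and the correct sign on each side. Your explicit choice of $p_0$ from the remainder constants simply makes rigorous the paper's ``for sufficiently small $p>0$'' step.
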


\begin{proof}
The difference of the two MSEs is
\begin{equation}
    \MSE_U(p,B)-\MSE_L(p,B)
    =-b_L^2(p)+\frac{\sigma_U^2(p)-\sigma_L^2(p)}{B}.
\end{equation}
Equating this expression to zero and substituting the local expansions gives
\eqref{eq:generic_unbiased_biased_crossover}.  For sufficiently small $p>0$,
both $\sigma_U^2(p)-\sigma_L^2(p)$ and $b_L^2(p)$ are positive, and the
difference is strictly decreasing as a function of $B$; hence the crossing is
unique and its sign gives the stated below-crossover dominance of $L$.
\end{proof}

This theorem is the method-agnostic bias--variance mechanism used throughout
the paper.  Exact PEC and population CDR instantiate it with $U=\PEC$ and
$L=\CDR$: PEC removes the target bias but pays a quasi-probability variance
overhead, whereas CDR can have smaller target-shot variance while retaining a
calibration-induced residual bias.

\section{Exact PEC finite-shot boundary}
\label{sec:pec_boundary}

For a $q$-qubit gate, $q\in\{1,2\}$, use the local depolarizing convention
\begin{equation}
    \mathcal{D}_{p_q}^{(q)}(\rho)
    =(1-p_q)\rho+\frac{p_q}{4^q-1}\sum_{P\in\mathcal{P}_q\setminus\{I\}}
    P\rho P^\dagger,
    \label{eq:depol_channel}
\end{equation}
with $p_1=p$ and $p_2=2p$ unless otherwise stated.

\begin{assumption}[PEC domain and sampling model]
\label{ass:pec_domain}
The measured observable is a Pauli observable with outcomes $\pm1$; the local
depolarizing model \eqref{eq:depol_channel} is exact; quasi-probability samples
are independent; and
\begin{equation}
    0\le p<\frac{15}{32}.
\end{equation}
This keeps both local Pauli eigenvalues positive under $p_1=p$ and $p_2=2p$.
\end{assumption}

For every non-identity $q$-qubit Pauli $Q\ne I$,
\begin{equation}
    \mathcal{D}_{p_q}^{(q)}(Q)=\lambda_q(p_q)Q,
    \qquad
    \lambda_q(p_q)=1-\frac{4^q}{4^q-1}p_q.
    \label{eq:lambda_q}
\end{equation}
The Pauli-symmetric inverse decomposition has local one-norm
\begin{equation}
    \gamma_g^{(q)}(p_q)
    =\frac{1+\frac{4^q-2}{4^q-1}p_q}
    {1-\frac{4^q}{4^q-1}p_q}.
    \label{eq:gamma_q}
\end{equation}
To see this, write the inverse as a symmetric quasi-probability combination of
Pauli conjugations.  Its identity coefficient is
$(4^q-(1-\lambda_q(p_q)))/(\lambda_q(p_q)4^q)$ and each non-identity Pauli
coefficient is $-(1-\lambda_q(p_q))/(\lambda_q(p_q)4^q)$; summing the absolute
values gives
\eqref{eq:gamma_q}, in line with standard PEC decompositions
\cite{Temme2017,Takagi2021}.
Thus
\begin{equation}
    \gamma_g^{(1)}(p)=\frac{1+\frac{2}{3}p}{1-\frac{4}{3}p},
    \qquad
    \gamma_g^{(2)}(2p)=\frac{1+\frac{28}{15}p}{1-\frac{32}{15}p}.
\end{equation}
For a circuit with $N_{1q}$ one-qubit noisy locations and $N_{2q}$ two-qubit
noisy locations,
\begin{equation}
    \Gamma(p)=\prod_{g\in 1q}\gamma_g^{(1)}(p)
    \prod_{g\in 2q}\gamma_g^{(2)}(2p),
    \qquad
    \ln\Gamma(p)=\left(2N_{1q}+4N_{2q}\right)p+O(p^2).
    \label{eq:Gamma_circuit}
\end{equation}

Under exact PEC, a single mitigated sample can be written as
\begin{equation}
    X_{\PEC}=\Gamma(p)SY,
\end{equation}
where $S\in\{-1,+1\}$ is the sampled quasi-probability sign and $Y$ is the
measurement outcome of the sampled implementable circuit.  Since $S^2=Y^2=1$,
\begin{equation}
    \sigma_{\PEC}^2(p)=\Var[X_{\PEC}]=\Gamma^2(p)-\mu_0^2.
    \label{eq:sigma_pec_exact}
\end{equation}

\begin{theorem}[Exact PEC finite-shot boundary for Pauli observables]
\label{prop:pec_exact}
Under Assumption~\ref{ass:pec_domain}, exact PEC has
\begin{equation}
    \MSE_{\PEC}(p,B)=\frac{\Gamma^2(p)-\mu_0^2}{B}.
\end{equation}
Its finite-shot gain over the noisy estimator is
\begin{equation}
    \Delta_{\PEC}(p,B)=b_N^2(p)-\frac{A_{\PEC}(p)}{B},
\end{equation}
where
\begin{equation}
    A_{\PEC}(p)=\Gamma^2(p)-1+\mu_p^2-\mu_0^2.
    \label{eq:Apec_pauli}
\end{equation}
If $A_{\PEC}(p)>0$ and $b_N(p)\ne0$, the PEC help-harm threshold is
\begin{equation}
    B_{\PEC}^*(p)
    =\frac{\Gamma^2(p)-1+\mu_p^2-\mu_0^2}{(\mu_p-\mu_0)^2}.
    \label{eq:Bpec_exact}
\end{equation}
\end{theorem}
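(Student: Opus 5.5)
The plan is to establish the three displayed claims in order: first zero bias, then the exact PEC variance, and finally substitution into the generic gain formula \eqref{eq:Delta_general}.

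\textbf{Unbiasedness.} Under Assumption~\ref{ass:pec_domain}, every local eigenvalue $\lambda_q(p_q)$ in \eqref{eq:lambda_q} is strictly positive. Hence each local depolarizing channel is invertible on the Pauli basis. The Pauli-symmetric quasi-probability decomposition, with the coefficients stated just before \eqref{eq:gamma_q}, therefore reproduces the inverse channel exactly. Writing the circuit inverse as a product of local inverses, the total decomposition is a signed mixture $\sum_i q_i \mathcal{C}_i$ with $\sum_i|q_i|=\Gamma(p)$. Sampling index $i$ with probability $|q_i|/\Gamma(p)$ and sign $S=\mathrm{sgn}(q_i)$ gives
\begin{equation*}
\E[\Gamma(p) S Y]=\sum_i q_i \Tr[O\,\mathcal{C}_i(\cdot)].
\end{equation*}
Each sampled circuit is implemented with the same exact noise, so inserting the inverse before each noisy location undoes that location's noise. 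The right-hand side therefore equals $\mu_0$, and $b_{\PEC}(p)=0$.

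\textbf{Variance.} Since $S^2=Y^2=1$, we have $X_{\PEC}^2=\Gamma^2(p)$ deterministically. It follows that $\Var[X_{\PEC}]=\Gamma^2(p)-\mu_0^2$, which is \eqref{eq:sigma_pec_exact}. Independent samples, which are also part of the assumption, make the variance of the $B$-shot mean equal to $(\Gamma^2(p)-\mu_0^2)/B$. Together with zero bias, this gives the stated $\MSE_{\PEC}$.

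\textbf{Gain and threshold.} I then apply \eqref{eq:Delta_general} with $b_{\PEC}=0$, which gives $D_{\PEC}=b_N^2(p)$. For the noisy estimator I use $\sigma_N^2=1-\mu_p^2$, valid for a Pauli observable measured projectively. Then
\begin{equation*}
A_{\PEC}=\Gamma^2-\mu_0^2-(1-\mu_p^2),
\end{equation*}
which is \eqref{eq:Apec_pauli}. If $A_{\PEC}>0$ and $b_N\neq 0$, then $D_{\PEC}>0$, and the definition $B_{\PEC}^*=A_{\PEC}/D_{\PEC}$ yields \eqref{eq:Bpec_exact} after substituting $b_N=\mu_p-\mu_0$.

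\textbf{Main obstacle.} The only nontrivial point is the unbiasedness step: checking that the local inverse decomposition composes correctly through the circuit, so that the product of local one-norms equals $\Gamma(p)$ and the signed estimator's expectation is exactly $\mu_0$. I would handle this by induction over the noisy locations, using linearity of expectation together with the fact that each local decomposition exactly inverts its channel. This is precisely where positivity of $\lambda_q$, i.e. $p<15/32$, is needed.
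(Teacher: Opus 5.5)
Your proposal is correct and follows the paper's proof: you establish zero bias, obtain $\Var[X_{\PEC}]=\Gamma^2(p)-\mu_0^2$ from $S^2=Y^2=1$, and substitute $\sigma_N^2=1-\mu_p^2$ into the generic gain formula to get $A_{\PEC}$ and $B_{\PEC}^*$. The only difference is that you sketch why composing the local Pauli-symmetric inverses gives an exactly unbiased estimator with total one-norm $\Gamma(p)$, whereas the paper simply takes unbiasedness as part of the exact-PEC hypothesis.
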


\begin{proof}
Exact PEC is unbiased, so $b_{\PEC}=0$ and $D_{\PEC}=b_N^2$.  For Pauli
measurements, $\sigma_N^2(p)=1-\mu_p^2$ and
\eqref{eq:sigma_pec_exact} gives
$\sigma_{\PEC}^2(p)=\Gamma^2(p)-\mu_0^2$.  Therefore
$A_{\PEC}=\sigma_{\PEC}^2-\sigma_N^2$ equals \eqref{eq:Apec_pauli}, and the
threshold follows from \eqref{eq:Delta_general}.
\end{proof}

\begin{corollary}[Abstract PEC one-norm form and Pauli-channel robustness]
\label{cor:abstract_pec}
The boundary \eqref{eq:Bpec_exact} does not rely on the depolarizing form except
through the exact inverse one-norm.  Consider any exact quasi-probability inverse
for a Pauli-channel, or for a circuit-level Pauli-Lindblad model diagonal in the
Pauli basis, with i.i.d. samples whose single-shot estimator for a Pauli
observable has the form
\begin{equation}
    X_{\PEC}=\Gamma_{\mathcal Q}(p)S Y,
    \qquad S,Y\in\{-1,+1\},
\end{equation}
and satisfies $\E[X_{\PEC}]=\mu_0$.  Then
\begin{equation}
    \MSE_{\PEC}(p,B)
    =\frac{\Gamma_{\mathcal Q}^2(p)-\mu_0^2}{B},
    \qquad
    B_{\PEC}^*(p)
    =\frac{\Gamma_{\mathcal Q}^2(p)-1+\mu_p^2-\mu_0^2}
    {(\mu_p-\mu_0)^2},
\end{equation}
whenever the denominator is nonzero and the numerator is positive.  If
\begin{equation}
    \Gamma_{\mathcal Q}^2(p)=1+\kappa_\Gamma p+O(p^2),
    \qquad
    \mu_p=\mu_0+\beta_Tp+O(p^2),
    \qquad \beta_T\ne0,
\end{equation}
then
\begin{equation}
    B_{\PEC}^*(p)
    =
    \frac{\kappa_\Gamma+2\mu_0\beta_T}{\beta_T^2}\frac{1}{p}+O(1).
    \label{eq:Bpec_abstract_small_p}
\end{equation}
\end{corollary}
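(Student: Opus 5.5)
The plan is to repeat the argument of Theorem~\ref{prop:pec_exact} with the depolarizing one-norm $\Gamma(p)$ replaced by the abstract one-norm $\Gamma_{\mathcal Q}(p)$. I would then expand the resulting threshold to leading order in $p$.

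First, the exact MSE. Since $S,Y\in\{-1,+1\}$, every single-shot estimator satisfies $X_{\PEC}^2=\Gamma_{\mathcal Q}^2(p)$ deterministically. Together with the hypothesis $\E[X_{\PEC}]=\mu_0$, this gives $\Var[X_{\PEC}]=\Gamma_{\mathcal Q}^2(p)-\mu_0^2$. The $B$ samples are i.i.d., so the sample mean is unbiased with variance $(\Gamma_{\mathcal Q}^2-\mu_0^2)/B$, which is the stated MSE. No property of the depolarizing channel is used here. The Pauli-channel or Pauli-Lindblad structure only serves to guarantee that such a sign-times-outcome decomposition with a constant one-norm exists. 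The noisy estimator is unchanged, so $\sigma_N^2=1-\mu_p^2$ still holds for the projectively measured Pauli observable.

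Next, the threshold. I would substitute $D_{\PEC}=b_N^2=(\mu_p-\mu_0)^2$ and $A_{\PEC}=\Gamma_{\mathcal Q}^2-\mu_0^2-(1-\mu_p^2)$ into \eqref{eq:Delta_general}. Under the stated sign conditions ($A_{\PEC}>0$ and nonzero denominator), the help-harm threshold is $B^*_{\PEC}=A_{\PEC}/D_{\PEC}$, which is the claimed formula.

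Finally, the asymptotics, which is the only step needing care. Write $\mu_p^2-\mu_0^2=(\mu_p-\mu_0)(\mu_p+\mu_0)=2\mu_0\beta_Tp+O(p^2)$. The numerator is then $(\kappa_\Gamma+2\mu_0\beta_T)p+O(p^2)$, and the denominator is $\beta_T^2p^2\bigl(1+O(p)\bigr)$. Dividing gives
\begin{equation*}
B^*_{\PEC}=\frac{\kappa_\Gamma+2\mu_0\beta_T}{\beta_T^2}\frac1p+\frac{O(p^2)}{\beta_T^2p^2}+O(1)=\frac{\kappa_\Gamma+2\mu_0\beta_T}{\beta_T^2}\frac1p+O(1).
\end{equation*}
The division by $p^2$ would ordinarily lose two orders. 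The remainder nonetheless stays $O(1)$ because the numerator's $O(p^2)$ term, divided by $\beta_T^2p^2$, is bounded, and the relative correction $1+O(p)$ in the denominator multiplies the $1/p$ term into an $O(1)$ contribution. For this I only need $\beta_T\neq0$, which is assumed. I would also note the implicit requirement that the $O(p^2)$ remainders in the expansions of $\Gamma_{\mathcal Q}^2$ and $\mu_p$ are genuine, i.e.\ locally bounded by a constant times $p^2$.
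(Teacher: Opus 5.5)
Your proposal is correct and follows the paper's proof. Like the paper, you reuse the variance computation of Theorem~\ref{prop:pec_exact}, which needs only exact unbiasedness and $S^2=Y^2=1$, with $\Gamma_{\mathcal Q}$ in place of $\Gamma$, and then expand the numerator $\Gamma_{\mathcal Q}^2-1+\mu_p^2-\mu_0^2$ to first order in $p$. Your explicit check that the remainder stays $O(1)$ after dividing by $\beta_T^2p^2$ spells out the step the paper states in one line.
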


\begin{proof}
The proof of Theorem~\ref{prop:pec_exact} uses only exact unbiasedness and
$S^2=Y^2=1$.  Therefore the same variance calculation holds with
$\Gamma_{\mathcal Q}$ in place of the local-depolarizing $\Gamma$.  The small-$p$
form follows by expanding
$\Gamma_{\mathcal Q}^2(p)-1+\mu_p^2-\mu_0^2$.
\end{proof}

Corollary~\ref{cor:abstract_pec} is the form used by the finite-shot comparison
theorem below.  The local depolarizing convention gives a closed-form
$\Gamma(p)$, while general Pauli-channel or sparse Pauli-Lindblad models enter
through the one-norm of the chosen exact inverse decomposition
\cite{vandenBerg2023}.

If $\mu_p=\mu_0+\beta_Tp+O(p^2)$ with $\beta_T\ne0$, then
\begin{equation}
    B_{\PEC}^*(p)
    =\frac{2(2N_{1q}+4N_{2q})+2\mu_0\beta_T}{\beta_T^2}\frac{1}{p}+O(1),
\end{equation}
whenever the numerator is positive.  Thus exact PEC begins helping only when the
shot budget is large enough to pay for the quasi-probability variance overhead.

\begin{remark}[Leading positivity in the Clifford-Pauli case]
For a Clifford circuit and Pauli observable under the local depolarizing
convention, the noisy expectation has the multiplicative form
$\mu_p=\Lambda(p)\mu_0$ with
$\Lambda(p)=1-R_O p+O(p^2)$ and
$0\le R_O\le \frac{4}{3}N_{1q}+\frac{32}{15}N_{2q}$.  Hence the leading
coefficient of $A_{\PEC}(p)$ satisfies
\begin{equation}
    2(2N_{1q}+4N_{2q})-2R_O\mu_0^2
    \ge \frac{4}{3}N_{1q}+\frac{56}{15}N_{2q},
\end{equation}
using $\mu_0^2\le1$.  Thus the PEC variance-overhead numerator is locally
positive in this common Clifford-Pauli setting whenever the circuit contains a
noisy location.  For general non-Clifford targets, the conditional statement
above is the safer form and no such uniform lower bound is assumed.
\end{remark}

\section{CDR calibration floor}
\label{sec:cdr_calibration}

Let $\{C_j\}_{j=1}^{N_t}$ be near-Clifford training circuits with noisy and ideal
expectations
\begin{equation}
    x_j(p)=\mu_p^{(j)},
    \qquad
    y_j=\mu_0^{(j)}.
\end{equation}
Linear CDR learns $y\approx ax+b$.  In the pretrained population setting,
\begin{equation}
    (a^*(p),b^*(p))
    =\argmin_{a,b}\E_{\mathcal{T}}[(y-a x(p)-b)^2],
\end{equation}
and the target residual bias is
\begin{equation}
    b_{\mathrm{mis}}(p)=a^*(p)\mu_p+b^*(p)-\mu_0.
    \label{eq:bmis_def}
\end{equation}
This is a model misspecification bias: it remains even with infinitely many
target shots and population training data under target-response mismatch between
the affine model learned from the training distribution and the target.

\begin{assumption}[Analytic CDR response]
\label{ass:cdr_response}
For each training circuit and for the target circuit,
\begin{align}
    \mu_p^{(j)}&=\mu_0^{(j)}+\beta_jp+\chi_jp^2+O(p^3),\label{eq:train_expansion}\\
    \mu_p&=\mu_0+\beta_Tp+\chi_Tp^2+O(p^3).\label{eq:target_expansion}
\end{align}
The training-circuit remainders are uniform over the support of
$\mathcal{T}$, or are dominated so that expectations, variances, and
covariances may be expanded termwise.  This condition is automatic for a
finite training set.
\end{assumption}

The least-squares algebra is collected in
Appendix~\ref{app:cdr_mathematical_details}.  Its main consequence for the
operating-window argument is that the population CDR map has local expansion
\begin{equation}
    a^*(p)=1-a_1p+O(p^2),
    \qquad
    b^*(p)=-b_1p+O(p^2),
\end{equation}
and hence the target residual has the form
\begin{equation}
    b_{\mathrm{mis}}(p)=\delta_1p+\delta_2p^2+O(p^3),
    \qquad
    \delta_1=\beta_T-a_1\mu_0-b_1.
\end{equation}
The coefficient $\delta_1$ is the first-order CDR calibration mismatch: it is
zero exactly when the target's first-order noise response lies on the affine
response learned from the training family.  Thus a nonzero $\delta_1$ gives a
population calibration floor even with infinite target shots and exact population
training data.

The appendix also records the higher-order and statistical consequences used in
the numerical tests: when $\delta_1=0$ but $\delta_2\ne0$, the PEC-CDR
crossover scales as $p^{-3}$; response-blind affine training has a projection
floor unless the target response is affine in the ideal value; and finite CDR
training adds a coefficient-estimation term controlled by the training Gram
conditioning.  These results are not needed to derive the main PEC-CDR window,
but they explain the calibration-hierarchy, projection-floor, and finite-training
tests reported in Appendix~\ref{app:phase8_plus_experimental_layers}.

In the pretrained setting,
\begin{equation}
    \MSE_{\CDR}(p,B)=b_{\mathrm{mis}}^2(p)
    +a^{*2}(p)\frac{\sigma_N^2(p)}{B}.
\end{equation}
Define
\begin{equation}
    D_{\CDR}(p)=b_N^2(p)-b_{\mathrm{mis}}^2(p),
    \qquad
    A_{\CDR}(p)=(a^{*2}(p)-1)\sigma_N^2(p).
    \label{eq:Acdr_Dcdr}
\end{equation}
For small $p$, the CDR expansion in
Appendix~\ref{app:cdr_mathematical_details} gives
$A_{\CDR}(p)=-2a_1\bigl(1-\mu_0^2\bigr)p+O(p^2)$, so the leading CDR
variance-overhead coefficient used in the operating-window theorem below is
$\kappa_{\CDR}=-2a_1(1-\mu_0^2)$.
When $D_{\CDR}(p)>0$ and $A_{\CDR}(p)>0$, the CDR help-harm threshold is
\begin{equation}
    B_{\CDR}^*(p)
    =\frac{(a^{*2}(p)-1)\sigma_N^2(p)}
    {b_N^2(p)-b_{\mathrm{mis}}^2(p)}.
    \label{eq:Bcdr}
\end{equation}
If $D_{\CDR}(p)\le0$ and $A_{\CDR}(p)\ge0$, the MSE ordering favors the
unmitigated estimator at every target-shot budget, strictly so unless
$D_{\CDR}=A_{\CDR}=0$.  Finite training adds coefficient-estimation variance to
CDR, but it leaves the calibration-induced bias floor unchanged.

More generally, if
\begin{equation}
    b_N(p)=\alpha_rp^r+O(p^{r+1}),
    \qquad
    b_{\mathrm{mis}}(p)=\delta_sp^s+O(p^{s+1}),
\end{equation}
then CDR improves asymptotic bias only while
$|b_{\mathrm{mis}}(p)|<|b_N(p)|$.  When $s>r$, the leading calibration ceiling is
\begin{equation}
    p_{\CDR}^*=\left(\frac{|\alpha_r|}{|\delta_s|}\right)^{1/(s-r)}.
\end{equation}
When $s=r$, local improvement requires $|\delta_s|<|\alpha_r|$; when $s<r$, the
CDR residual dominates the noisy bias locally.

\section{PEC-CDR operating window}
\label{sec:phase_diagram}

The finite-shot comparison rule is based on the gains over the noisy estimator:
\begin{align}
    \Delta_{\PEC}(p,B)&=b_N^2(p)-\frac{A_{\PEC}(p)}{B},\label{eq:Delta_pec}\\
    \Delta_{\CDR}(p,B)&=D_{\CDR}(p)-\frac{A_{\CDR}(p)}{B}.\label{eq:Delta_cdr}
\end{align}
No mitigation has gain zero.  Therefore the MSE-optimal choice is the method with
the largest value among $0$, $\Delta_{\PEC}$, and $\Delta_{\CDR}$.

\begin{definition}[Finite-shot PEC-CDR operating regions]
\label{prop:phase_partition}
Assume exact PEC and pretrained CDR.  The positive-gain operating regions are
\begin{align}
    \mathcal{R}_{\PEC}&=\{(p,B):\Delta_{\PEC}>0,
    \Delta_{\PEC}>\Delta_{\CDR}\},\\
    \mathcal{R}_{\CDR}&=\{(p,B):\Delta_{\CDR}>0,
    \Delta_{\CDR}>\Delta_{\PEC}\},\\
    \mathcal{R}_{0}&=\{(p,B):\Delta_{\PEC}\le0,
    \Delta_{\CDR}\le0\}.
\end{align}
Within $\mathcal{R}_{\PEC}$, PEC gives the smallest MSE; within
$\mathcal{R}_{\CDR}$, CDR gives the smallest MSE; within $\mathcal{R}_0$, no
mitigation should be used.  Equalities are indifference boundaries on which the
MSE-optimal choice is set-valued.
\end{definition}

The PEC-CDR indifference curve is obtained from
$\Delta_{\PEC}(p,B)=\Delta_{\CDR}(p,B)$.  Since
$b_N^2-D_{\CDR}=b_{\mathrm{mis}}^2$, the crossover budget is
\begin{equation}
    B_{\PEC=\CDR}(p)=
    \frac{A_{\PEC}(p)-A_{\CDR}(p)}{b_{\mathrm{mis}}^2(p)},
    \label{eq:pec_cdr_crossover_budget}
\end{equation}
whenever the right-hand side is positive.  If $A_{\PEC}\le A_{\CDR}$ and
$b_{\mathrm{mis}}\ne0$, the gain gap favors exact PEC throughout that local
slice in $p$.  If $A_{\PEC}>A_{\CDR}$, CDR can dominate below the crossover
budget when both methods have positive gain, while exact PEC dominates above it.

\begin{theorem}[CDR operating-window theorem]
\label{thm:cdr_operating_window}
Assume, as $p\to0$,
\begin{align}
    b_N(p)&=\alpha p+O(p^2),\qquad \alpha\ne0,\\
    b_{\mathrm{mis}}(p)&=\delta p+O(p^2),\\
    A_{\PEC}(p)&=\kappa_{\PEC}p+O(p^2),\\
    A_{\CDR}(p)&=\kappa_{\CDR}p+O(p^2),
\end{align}
with $\kappa_{\PEC}>\kappa_{\CDR}>0$ and $\alpha^2>\delta^2>0$.  In the
single-Pauli CDR setting summarized in Section~\ref{sec:cdr_calibration},
$\delta=\delta_1$;
the abstract symbols also cover the Hamiltonian substitutions of
Section~\ref{sec:hamiltonian_extension}.  For every
sufficiently small fixed $p>0$, any CDR-dominant positive-gain interval is
bounded above by
\begin{equation}
    B_{\PEC=\CDR}(p)=
    \frac{\kappa_{\PEC}-\kappa_{\CDR}}{\delta^2}\frac{1}{p}+O(1).
    \label{eq:cdr_window_upper}
\end{equation}
When CDR has a nonempty local dominance window, its leading form is
\begin{equation}
    \mathcal{W}_{\CDR}(p)=
    \left(
    \frac{\kappa_{\CDR}}{\alpha^2-\delta^2}\frac{1}{p},
    \frac{\kappa_{\PEC}-\kappa_{\CDR}}{\delta^2}\frac{1}{p}
    \right)+O(1).
    \label{eq:cdr_window_interval}
\end{equation}
At leading order this window is nonempty exactly when
\begin{equation}
    \delta^2<\alpha^2\frac{\kappa_{\PEC}-\kappa_{\CDR}}{\kappa_{\PEC}}.
    \label{eq:cdr_window_nonempty}
\end{equation}
Equivalently,
\begin{equation}
    \frac{B_{\PEC=\CDR}(p)}{B_{\PEC}^*(p)}
    \to
    \frac{\alpha^2(\kappa_{\PEC}-\kappa_{\CDR})}
    {\delta^2\kappa_{\PEC}}.
    \label{eq:window_ratio}
\end{equation}
Thus the budget scale over which CDR can dominate exact PEC grows like
$1/\delta^2$ as calibration mismatch vanishes and collapses quadratically as the
first-order mismatch increases.
\end{theorem}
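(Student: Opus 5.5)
The plan is to reduce every claim to the three gain functions of Definition~\ref{prop:phase_partition}. Each is a quotient of quantities that are $O(p)$ or $O(p^2)$, so at fixed small $p$ the comparison becomes a sign analysis of affine functions of $1/B$. The first step computes the crossover budget \eqref{eq:pec_cdr_crossover_budget}. Since $b_N^2-D_{\CDR}=b_{\mathrm{mis}}^2$, the gap is
\begin{equation}
\Delta_{\PEC}(p,B)-\Delta_{\CDR}(p,B)=b_{\mathrm{mis}}^2(p)-\frac{A_{\PEC}(p)-A_{\CDR}(p)}{B}.
\end{equation}
This is exactly the structure of Theorem~\ref{thm:generic_bias_variance_window}, with $U=\PEC$, $L=\CDR$, $\kappa=\kappa_{\PEC}-\kappa_{\CDR}>0$ and residual bias $\delta$. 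I will invoke that theorem directly. It gives $B_{\PEC=\CDR}(p)=(\kappa_{\PEC}-\kappa_{\CDR})p^{-1}/\delta^2+O(1)$, shows this positive root is unique for small $p$, and shows $\Delta_{\CDR}>\Delta_{\PEC}$ holds iff $B<B_{\PEC=\CDR}$. Every CDR-dominant interval therefore lies below this budget, which is \eqref{eq:cdr_window_upper}. The $O(1)$ error comes from expanding the ratio $(\kappa p+O(p^2))/(\delta^2p^2+O(p^3))=\frac{\kappa}{\delta^2 p}(1+O(p))$.

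The second step gives the lower endpoint. I expand $D_{\CDR}=b_N^2-b_{\mathrm{mis}}^2=(\alpha^2-\delta^2)p^2+O(p^3)$, which is positive for small $p$ because $\alpha^2>\delta^2$. Since also $A_{\CDR}=\kappa_{\CDR}p+O(p^2)>0$, the quantity $\Delta_{\CDR}$ is strictly increasing in $B$. It is therefore positive exactly for $B>B_{\CDR}^*(p)=\kappa_{\CDR}p^{-1}/(\alpha^2-\delta^2)+O(1)$, by the same ratio expansion. Intersecting the two conditions defining $\mathcal R_{\CDR}$ gives the interval $(B_{\CDR}^*,B_{\PEC=\CDR})$, whose leading form is \eqref{eq:cdr_window_interval}.

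The third step handles nonemptiness and the ratio. Both endpoints are of the form $c/p+O(1)$, so at leading order the window is nonempty iff $\kappa_{\CDR}/(\alpha^2-\delta^2)<(\kappa_{\PEC}-\kappa_{\CDR})/\delta^2$. Cross-multiplying positive quantities gives $\kappa_{\CDR}\delta^2<(\kappa_{\PEC}-\kappa_{\CDR})(\alpha^2-\delta^2)$, i.e.\ $\kappa_{\PEC}\delta^2<\alpha^2(\kappa_{\PEC}-\kappa_{\CDR})$, which is \eqref{eq:cdr_window_nonempty}. For the "equivalently" statement, I expand $B_{\PEC}^*=A_{\PEC}/b_N^2=\kappa_{\PEC}p^{-1}/\alpha^2+O(1)$ (Theorem~\ref{prop:pec_exact}) and divide, which gives \eqref{eq:window_ratio}.

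The step I expect to be the main obstacle is justifying "equivalently". The quantity that naturally governs nonemptiness is $B_{\CDR}^*$, not $B_{\PEC}^*$. I plan to show that the three leading constants always satisfy a common ordering: $B_{\PEC=\CDR}>B_{\CDR}^*$ iff $B_{\PEC=\CDR}>B_{\PEC}^*$ iff $B_{\PEC}^*>B_{\CDR}^*$. This follows because all three thresholds are zeros of pairwise differences of the three affine-in-$1/B$ gains, and such zeros coincide when one of them does. An algebraic check confirms it: each inequality reduces to the same condition \eqref{eq:cdr_window_nonempty}. The ratio limit exceeding $1$ is then exactly the nonemptiness criterion. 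A final minor point is that "leading order" must exclude the boundary case of equality in \eqref{eq:cdr_window_nonempty}, where the $O(1)$ terms decide the outcome. I will state that strict inequality makes the window nonempty for all sufficiently small $p$, and that strict reverse inequality makes it empty.
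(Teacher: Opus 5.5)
Your proposal is correct and follows essentially the same route as the paper's proof. Like the paper, you expand $B_{\CDR}^*$ from \eqref{eq:Bcdr} and $B_{\PEC=\CDR}$ from \eqref{eq:pec_cdr_crossover_budget}, take the window between them, compare the leading endpoints, and divide by $B_{\PEC}^*=\kappa_{\PEC}\alpha^{-2}p^{-1}+O(1)$. Your additional steps make explicit what the paper's short proof leaves implicit: you invoke Theorem~\ref{thm:generic_bias_variance_window} for uniqueness and sign of the crossover, and you check that the ratio limit exceeds $1$ exactly under \eqref{eq:cdr_window_nonempty}, with $B_{\PEC}^*$ lying between the two endpoints.
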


\begin{proof}
The CDR help-harm threshold follows from \eqref{eq:Bcdr}:
\begin{equation}
    B_{\CDR}^*(p)=\frac{\kappa_{\CDR}}{\alpha^2-\delta^2}\frac{1}{p}+O(1).
\end{equation}
The PEC-CDR crossover follows from \eqref{eq:pec_cdr_crossover_budget} because
$A_{\PEC}-A_{\CDR}=(\kappa_{\PEC}-\kappa_{\CDR})p+O(p^2)$ and
$b_{\mathrm{mis}}^2=\delta^2p^2+O(p^3)$.  Therefore the CDR-dominant operating
window lies between these endpoints.  Comparing the leading lower and
upper endpoints gives \eqref{eq:cdr_window_nonempty}; dividing by
$B_{\PEC}^*(p)=\kappa_{\PEC}\alpha^{-2}p^{-1}+O(1)$ gives
\eqref{eq:window_ratio}.
\end{proof}

\begin{remark}[Window displacement near $\mu_0=0$]
For near-multiplicative training and target families ($b_1\approx0$,
$\mu_p\approx\Lambda_T(p)\mu_0$), both $\alpha$ and $\delta$ are proportional
to $\mu_0$, while $\kappa_{\PEC}$ and $\kappa_{\CDR}$ stay bounded away from
zero.  Both endpoints of \eqref{eq:cdr_window_interval} therefore scale like
$1/\mu_0^2$: the leading-order nonemptiness condition
\eqref{eq:cdr_window_nonempty} is unchanged, but the entire window is pushed
beyond any fixed shot-budget range as $\mu_0\to0$.  The near-zero-$\mu_0$
robustness test in Appendix~\ref{sec:phase7_supplement} observes exactly this
displacement on the selected budget grid.
\end{remark}

\begin{remark}[Residual PEC model bias]
If the assumed-model PEC estimator has residual first-order bias
$b_{\PEC,\mathrm{res}}(p)=\epsilon p+O(p^2)$, the same MSE equality gives the
local deformation
\begin{equation}
    B_{\PEC=\CDR}^{\mathrm{viol}}(p)
    =
    \frac{\kappa_{\PEC}-\kappa_{\CDR}}{\delta^2-\epsilon^2}\frac{1}{p}
    +O(1),
\end{equation}
whenever $\delta^2>\epsilon^2$.  If $\epsilon^2\ge\delta^2$, PEC no longer has
the smaller leading bias, so the local point falls outside the exact-PEC
ordering premise of Theorem~\ref{thm:cdr_operating_window}.  This is the local
operating-boundary effect of model violation \cite{Govia2025}.
\end{remark}

Theorem~\ref{thm:cdr_operating_window} is the main message of the paper.  The
PEC-CDR comparison is an operating-point statement: the ordering is controlled
by the finite-shot budget and by the first-order calibration mismatch of the CDR
training family.

\section{Commuting Hamiltonian observables}
\label{sec:hamiltonian_extension}

The single-Pauli formulas above are the cleanest setting for the PEC boundary,
but variational applications often estimate a Hamiltonian
\begin{equation}
    H=\sum_{e=1}^{m} w_e P_e,
    \label{eq:commuting_hamiltonian}
\end{equation}
where the $P_e$ are commuting Pauli observables.  MaxCut is the standard case:
$P_e=Z_iZ_j$ for graph edges $e=(i,j)$, and all edge terms can be read out from
the same computational-basis shot.

Let $Y(p)=(Y_1(p),\ldots,Y_m(p))^\top\in\{\pm1\}^m$ be the vector of Pauli
outcomes from one noisy shot.  Define
\begin{equation}
    \boldsymbol{\mu}_p=\E[Y(p)],
    \qquad
    C_p=\E[Y(p)Y(p)^\top],
    \qquad
    \Sigma_p=C_p-\boldsymbol{\mu}_p\boldsymbol{\mu}_p^\top .
\end{equation}
The noisy scalar estimator for $H$ has
\begin{equation}
    \mu_{p,H}=w^\top\boldsymbol{\mu}_p,
    \qquad
    \sigma_{N,H}^2(p)=w^\top\Sigma_p w,
    \qquad
    b_{N,H}(p)=w^\top(\boldsymbol{\mu}_p-\boldsymbol{\mu}_0).
    \label{eq:noisy_hamiltonian_mse}
\end{equation}
Thus simultaneous measurement changes the variance term from a sum of
independent variances to a covariance quadratic form.

For exact PEC, a quasi-probability sample produces a signed implementable
circuit and a simultaneous outcome vector $Y_{\mathcal Q}\in\{\pm1\}^m$.  With
one-norm $\Gamma_{\mathcal Q}(p)$, the Hamiltonian sample is
\begin{equation}
    X_{\PEC,H}=\Gamma_{\mathcal Q}(p)S\,w^\top Y_{\mathcal Q}.
\end{equation}
Exactness gives $\E[X_{\PEC,H}]=\mu_{0,H}=w^\top\boldsymbol{\mu}_0$, while
\begin{equation}
    \sigma_{\PEC,H}^2(p)
    =
    \Gamma_{\mathcal Q}^2(p)\,w^\top C_{\mathcal Q}(p)w-\mu_{0,H}^2,
    \qquad
    C_{\mathcal Q}(p)=\E_{\mathcal Q}[Y_{\mathcal Q}Y_{\mathcal Q}^\top].
    \label{eq:pec_hamiltonian_variance}
\end{equation}
In special sampling conventions where the absolute quasi-probability ensemble
has the same second moments as the noisy target ensemble for the measured
commuting set, $C_{\mathcal Q}(p)=C_p$ and
\eqref{eq:pec_hamiltonian_variance} reduces to
\begin{equation}
    \Gamma_{\mathcal Q}^2(p)
    \sum_{e,e'}w_ew_{e'}\langle P_eP_{e'}\rangle_p-\mu_{0,H}^2.
\end{equation}
The general expression \eqref{eq:pec_hamiltonian_variance} is the safer object:
it records the second moments of the actual PEC sampling ensemble.

\begin{corollary}[Hamiltonian finite-shot boundary]
\label{cor:hamiltonian_boundary}
For a commuting Pauli Hamiltonian measured as above, exact PEC has
\begin{equation}
    \MSE_{\PEC,H}(p,B)=\frac{\sigma_{\PEC,H}^2(p)}{B},
\end{equation}
and its gain over the noisy Hamiltonian estimator is
\begin{equation}
    \Delta_{\PEC,H}(p,B)
    =b_{N,H}^2(p)-\frac{A_{\PEC,H}(p)}{B},
    \qquad
    A_{\PEC,H}(p)=\sigma_{\PEC,H}^2(p)-w^\top\Sigma_p w.
\end{equation}
When $A_{\PEC,H}(p)>0$ and $b_{N,H}(p)\ne0$,
\begin{equation}
    B_{\PEC,H}^*(p)=\frac{A_{\PEC,H}(p)}{b_{N,H}^2(p)}.
\end{equation}
\end{corollary}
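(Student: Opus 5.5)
The plan is to copy the single-Pauli argument of Theorem~\ref{prop:pec_exact} into the vector setting of \eqref{eq:noisy_hamiltonian_mse}, and then apply the generic gain formula \eqref{eq:Delta_general}.

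\textbf{Step 1 (PEC bias and variance).} The PEC estimator averages $B$ i.i.d.\ copies of $X_{\PEC,H}=\Gamma_{\mathcal Q}(p)S\,w^\top Y_{\mathcal Q}$. Exactness of the quasi-probability inverse, applied termwise to each commuting Pauli $P_e$ and combined by linearity, gives $\E[X_{\PEC,H}]=w^\top\boldsymbol{\mu}_0=\mu_{0,H}$. So $b_{\PEC,H}=0$.

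For the second moment, use $S^2=1$:
\[
X_{\PEC,H}^2=\Gamma_{\mathcal Q}^2(p)\,w^\top Y_{\mathcal Q}Y_{\mathcal Q}^\top w .
\]
Taking expectations gives $\Gamma_{\mathcal Q}^2\,w^\top C_{\mathcal Q}(p)w$. Subtracting $\mu_{0,H}^2$ yields \eqref{eq:pec_hamiltonian_variance}. Independence of the $B$ samples then gives $\MSE_{\PEC,H}=\sigma_{\PEC,H}^2/B$.

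The only care needed is that the sign $S$ and the outcome vector come from the same sampled circuit. The expansion therefore has to be done jointly over $(S,Y_{\mathcal Q})$, not edge by edge with independent signs. Since the scalar $\Gamma S$ multiplies the whole vector, this works without difficulty.

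\textbf{Step 2 (noisy estimator and gain).} From \eqref{eq:noisy_hamiltonian_mse}, the noisy estimator has
\[
\MSE_{N,H}=b_{N,H}^2+\frac{w^\top\Sigma_p w}{B}.
\]
This holds because the single-shot variance of $w^\top Y(p)$ is $w^\top\Sigma_p w$. We then substitute into \eqref{eq:Delta_general} with $D=b_{N,H}^2-0$ and $A_{\PEC,H}=\sigma_{\PEC,H}^2-w^\top\Sigma_p w$. This gives $\Delta_{\PEC,H}$ in the stated form.

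\textbf{Step 3 (threshold).} When $A_{\PEC,H}>0$ and $b_{N,H}\neq0$, $\Delta_{\PEC,H}$ is strictly increasing in $B$. It vanishes exactly at $B=A_{\PEC,H}/b_{N,H}^2$, which is the claimed threshold. This is the common-threshold case of Appendix~\ref{app:help_harm_cases}.

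\textbf{Main obstacle.} The only nonroutine point is Step 1: justifying that the single-shot estimator of the weighted sum retains the factorized form $\Gamma S\,w^\top Y$. Equivalently, one quasi-probability draw must serve all commuting terms simultaneously, so that unbiasedness holds for the vector and not just for each marginal. I would argue this from the fact that the inverse decomposition acts at the channel level, before measurement. Hence the same signed ensemble reproduces the ideal state, and with it every diagonal-basis Pauli expectation at once.
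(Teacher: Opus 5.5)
Your proposal is correct and follows the paper's own derivation, which the paper gives as the text preceding the corollary rather than as a separate proof: unbiasedness from exactness, the second moment $\Gamma_{\mathcal Q}^2(p)\,w^\top C_{\mathcal Q}(p)w$ from $S^2=1$, the noisy single-shot variance $w^\top\Sigma_p w$, and substitution into the generic gain formula \eqref{eq:Delta_general}. The single-draw factorized form $\Gamma_{\mathcal Q}S\,w^\top Y_{\mathcal Q}$ that you flag as the main obstacle is part of the paper's measurement setup (``measured as above''), so it is not something you need to prove. Note also that unbiasedness of the vector mean only involves the marginals; the joint draw matters only through $C_{\mathcal Q}(p)$ in the variance.
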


For CDR, one may train a scalar correction directly on Hamiltonian values, or
train edgewise corrections and aggregate them.  In the scalar case the previous
CDR formulas hold with $\mu_0,\mu_p$ replaced by $\mu_{0,H},\mu_{p,H}$ and the
Pauli shot variance $\sigma_N^2(p)=1-\mu_p^2$ replaced by
$\sigma_{N,H}^2(p)=w^\top\Sigma_p w$.  In an
edgewise first-order model,
\begin{equation}
    \boldsymbol{\mu}_p
    =\boldsymbol{\mu}_0+\boldsymbol{\beta}_T p+O(p^2),
    \qquad
    \boldsymbol{b}_{\mathrm{mis}}(p)=\boldsymbol{\delta}\,p+O(p^2),
\end{equation}
the aggregate Hamiltonian responses are
\begin{align}
    b_{N,H}(p)&=\alpha_Hp+O(p^2),
    &\alpha_H&=w^\top\boldsymbol{\beta}_T,\\
    b_{\mathrm{mis},H}(p)&=\delta_Hp+O(p^2),
    &\delta_H&=w^\top\boldsymbol{\delta}.
\end{align}
Per-edge first-order calibration, $\boldsymbol{\delta}=0$, is sufficient but not
necessary for the Hamiltonian to be first-order calibrated; cancellations can
make $w^\top\boldsymbol{\delta}=0$ for the aggregate observable.  With
$A_{\PEC,H}(p)=\kappa_{\PEC,H}p+O(p^2)$ and
$A_{\CDR,H}(p)=\kappa_{\CDR,H}p+O(p^2)$, the operating-window theorem applies
verbatim after the replacements
\begin{equation}
    \alpha\mapsto\alpha_H,\qquad
    \delta\mapsto\delta_H,\qquad
    \kappa_{\PEC}\mapsto\kappa_{\PEC,H},\qquad
    \kappa_{\CDR}\mapsto\kappa_{\CDR,H}.
\end{equation}
For reference, the leading CDR variance-overhead coefficients are
$\kappa_{\CDR,H}=-2a_{1,H}\,w^\top\Sigma_0w$ for the scalar-trained correction,
with $a_{1,H}$ the first-order slope coefficient of the scalar regression, and
$\kappa_{\CDR,H}=-2\,(w\circ a_1)^\top\Sigma_0\,w$ for the edgewise scheme,
where $(w\circ a_1)_e=w_ea_{1,e}$; both reduce to
$\kappa_{\CDR}=-2a_1(1-\mu_0^2)$ in the single-Pauli case.
This is the covariance-level extension needed for full MaxCut energy
validation.

\section{Closed-form two-qubit example}
\label{sec:two_qubit_example}

This section gives a small analytical example whose role is to display the
finite-shot operating regions without Monte Carlo simulation.  The example is
kept simple: the entangling layer does not change the ideal value of the chosen
observable, so the calculation isolates noise-response mismatch.

Consider
\begin{equation}
    \ket{00}
    \xrightarrow{R_y(\theta_1)\otimes R_y(\theta_2)}
    \xrightarrow{CZ}
    \text{measure } Z_1Z_2.
\end{equation}
Because $CZ$ commutes with $Z_1Z_2$,
\begin{equation}
    \mu_0(\theta)=\cos\theta_1\cos\theta_2.
\end{equation}
With local depolarizing noise after each one-qubit rotation and after the
entangling layer,
\begin{equation}
    \mu_{T,p}=\Lambda_T(p)\mu_0,
    \qquad
    \Lambda_T(p)=\lambda_1(p)^2\lambda_2(2p)
    =1-\frac{24}{5}p+O(p^2).
\end{equation}
For exact PEC, $\Gamma_T(p)=\gamma_1(p)^2\gamma_2(2p)$ and the PEC boundary is
\eqref{eq:Bpec_exact}.

Now train CDR on circuits that remove the entangling layer while preserving the
same ideal-value family:
\begin{equation}
    \ket{00}
    \xrightarrow{R_y(\theta_1)\otimes R_y(\theta_2)}
    \text{measure } Z_1Z_2.
\end{equation}
The training response is
\begin{equation}
    \Lambda_{\mathrm{tr}}(p)=\lambda_1(p)^2
    =1-\frac{8}{3}p+O(p^2).
\end{equation}
Since $x_{\mathrm{tr}}(p)=\Lambda_{\mathrm{tr}}(p)y$, the population regression
is exactly
\begin{equation}
    a^*(p)=\frac{1}{\Lambda_{\mathrm{tr}}(p)},
    \qquad
    b^*(p)=0.
\end{equation}
This holds for any training angle distribution with $V_y>0$, because the
relation $x_{\mathrm{tr}}=\Lambda_{\mathrm{tr}}y$ holds circuit by circuit.
Applied to the target,
\begin{equation}
    a^*(p)\mu_{T,p}-\mu_0
    =\left(\frac{\Lambda_T(p)}{\Lambda_{\mathrm{tr}}(p)}-1\right)\mu_0
    =\bigl(\lambda_2(2p)-1\bigr)\mu_0
    =-\frac{32}{15}\mu_0p,
\end{equation}
with no higher-order corrections, because the one-qubit factors cancel exactly
in the ratio $\Lambda_T/\Lambda_{\mathrm{tr}}=\lambda_2(2p)$.  Thus
$\delta_1=-\frac{32}{15}\mu_0$ and $\delta_2=0$: CDR has a purely first-order
residual even with a perfectly learned population regression on the training
family.

For $\theta_1=\pi/4$, $\theta_2=\pi/3$, and $\mu_0=\sqrt{2}/4$, direct evaluation
of the closed-form boundaries gives the operating regions in
Figure~\ref{fig:two_qubit_phase_diagram}.  CDR begins helping at a smaller shot
budget, but exact PEC wins beyond the PEC-CDR crossover because CDR retains a
first-order calibration bias.  For these angles the window-nonemptiness
condition \eqref{eq:cdr_window_nonempty} holds with room to spare:
$\delta^2/\alpha^2=16/81\approx0.20$, while
$(\kappa_{\PEC}-\kappa_{\CDR})/\kappa_{\PEC}=76/111\approx0.68$.  The numerical
threshold values used to generate the figure are archived with the reproduction
outputs.

\begin{figure}
\centering
\includegraphics[width=0.86\textwidth]{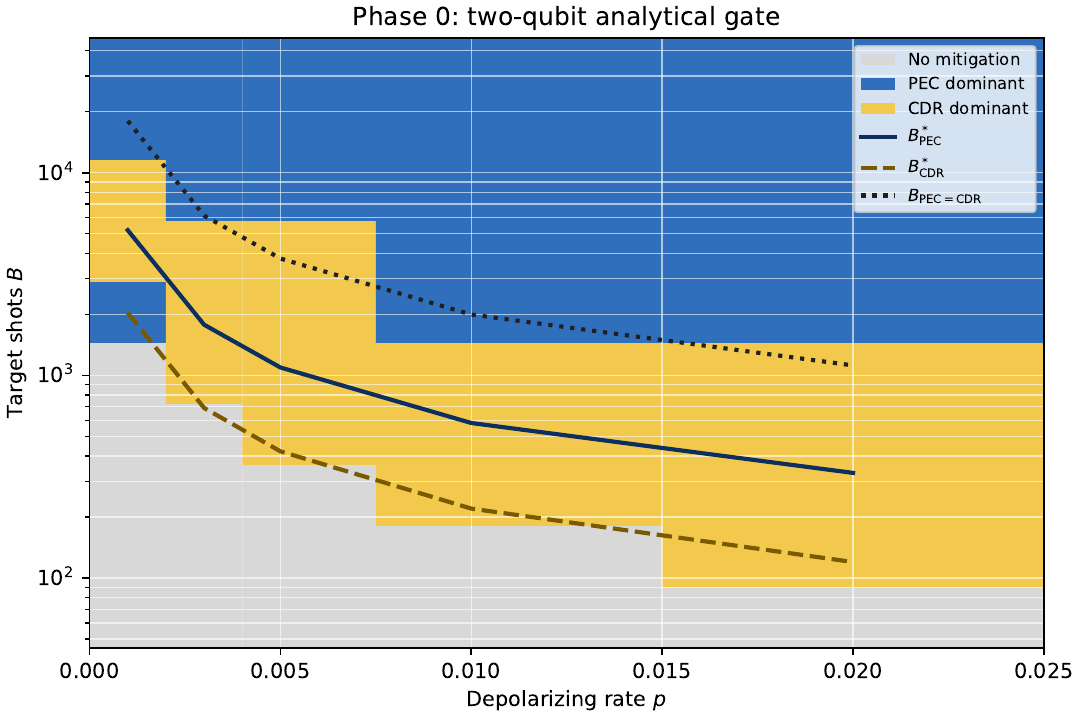}
\caption{Closed-form operating regions for the two-qubit example.  The gray
region is no mitigation, the blue region is PEC-dominant, and the yellow region
is CDR-dominant.  The solid, dashed, and dotted curves show
$B_{\PEC}^*(p)$, $B_{\CDR}^*(p)$, and $B_{\PEC=\CDR}(p)$, respectively.}
\label{fig:two_qubit_phase_diagram}
\end{figure}

\section{QAOA finite-shot validation}
\label{sec:qaoa_validation}

We test the finite-shot boundaries on a depth-one QAOA MaxCut circuit on a
four-qubit weighted graph with edges
\begin{equation}
    (0,1),(1,2),(2,3),(3,0),(0,2)
\end{equation}
and weights $1.0,0.8,1.2,0.9,0.7$.  The target angles are
$\gamma=0.72$ and $\beta=0.38$, with the depth-one ansatz
\begin{equation}
    \ket{\psi(\gamma,\beta)}
    =e^{-i\beta\sum_iX_i}\,e^{-i\gamma C}\,\ket{+}^{\otimes4},
    \qquad
    C=\sum_{(i,j)\in E}\frac{w_{ij}}{2}\bigl(1-Z_iZ_j\bigr).
\end{equation}
We first use the single edge observable $O_e=Z_0Z_1$ to isolate the scalar
boundary; the Hamiltonian covariance extension in
Section~\ref{sec:hamiltonian_extension} gives the corresponding bookkeeping for
the full MaxCut energy.  The ideal edge value is
\begin{equation}
    \mu_0=-0.40648296.
\end{equation}
The local depolarizing convention is the same as in
Section~\ref{sec:pec_boundary}.

Linear CDR is trained from $48$ restricted-subgraph QAOA circuits.  Each training
circuit contains the measured edge and, for one third of the training set, one
additional graph edge.  This creates a training family with a smaller two-qubit
noise response than the full target circuit while keeping the training data
correlated with the target observable.  For each noise rate, the
population linear CDR coefficients are fit from exact noisy and ideal training
expectations.  The finite-shot layer then uses $12{,}000$ Monte Carlo repetitions
per $(p,B)$ cell.  The noise rates span $p\in[0.001,0.020]$ over $11$ values,
and the shot budgets span the powers of two $B\in\{2^6,2^7,\ldots,2^{15}\}$;
the noise grid is fixed in the reproduction code.
Each cell estimates the empirical MSE of the noisy estimator, exact PEC under the
simulated noise model, and linear CDR.

\begin{figure}
\centering
\includegraphics[width=0.88\textwidth]{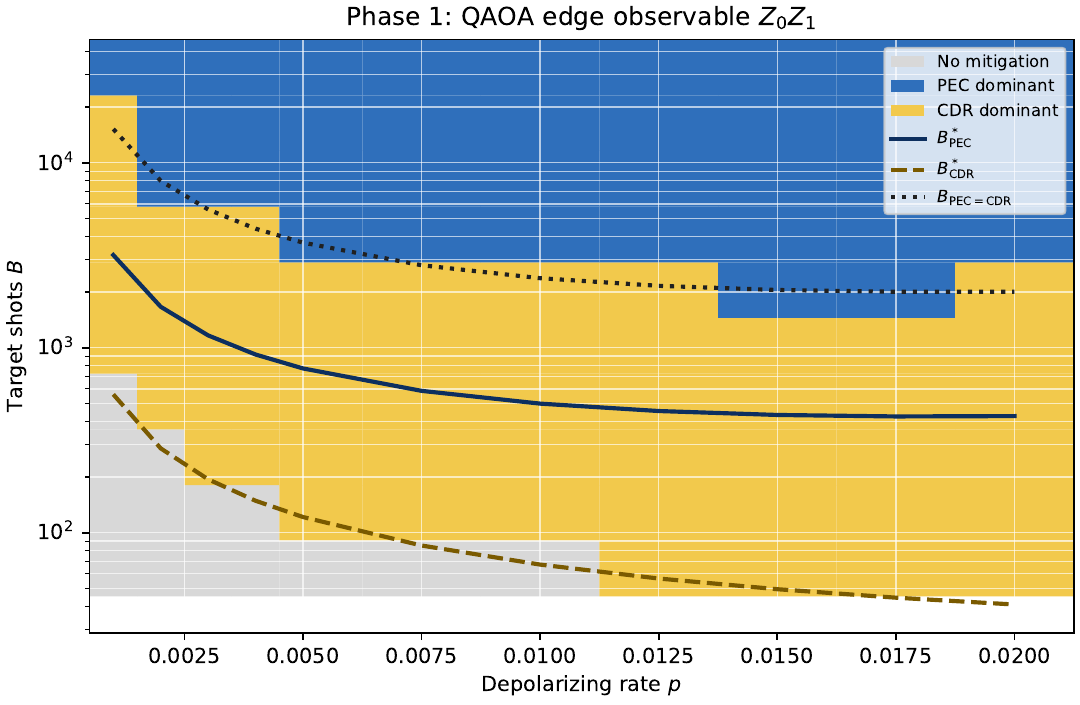}
\caption{Finite-shot QAOA comparison for the edge observable $Z_0Z_1$.  The
background color is the empirically MSE-optimal method at each simulated
$(p,B)$ cell.  The solid, dashed, and dotted curves show the predicted
$B_{\PEC}^*(p)$, $B_{\CDR}^*(p)$, and $B_{\PEC=\CDR}(p)$ boundaries computed from
the fitted population CDR model and exact PEC overhead.}
\label{fig:qaoa_validation_phase_map}
\end{figure}

The empirical comparison in Figure~\ref{fig:qaoa_validation_phase_map}
reproduces the finite-shot dominance structure predicted by
Theorem~\ref{thm:cdr_operating_window}.
At small budgets the noisy estimator can remain optimal; at intermediate budgets
CDR dominates because its target-shot variance overhead is smaller than PEC's;
and beyond the PEC-CDR crossover, exact PEC dominates because CDR retains a
first-order calibration residual.  Across the tested noise rates, the fitted
first-order residual is approximately stable,
$b_{\mathrm{mis}}(p)/p\approx 1.9$--$2.0$, and the crossover budgets fall from
about $1.52\times10^4$ shots at $p=0.001$ to about $2.01\times10^3$ shots at
$p=0.020$.  The drop is slower than the asymptotic $1/p$ law (a factor of
$7.6$ rather than $20$ across this range) because $\Gamma^2(p)$, and hence
$A_{\PEC}(p)-A_{\CDR}(p)$, grows superlinearly toward the upper end of the
tested noise range; the comparison is against the exact threshold
formulas \eqref{eq:pec_cdr_crossover_budget}, which track the empirical
crossovers within the validation tolerances.

\section{Numerical validation}
\label{sec:canonical_results}

The numerical study tests the finite-shot operating boundaries in progressively
less idealized settings.  The primary comparisons test the PEC/CDR/no-mitigation
regions, while the appendix analyses vary the observable, circuit instance,
noise channel, calibration design, and device-derived noise assumptions.  Table
\ref{tab:canonical_phase_results} summarizes the main numerical evidence for
the finite-shot result.

\begin{table}[b]
\centering
\footnotesize
\caption{Numerical summary for the finite-shot PEC/CDR operating
boundaries.}
\label{tab:canonical_phase_results}
\begin{ruledtabular}
\begin{tabular}{p{0.24\textwidth} p{0.68\textwidth}}
\toprule
Comparison & Main result \\
\colrule
Two-qubit example &
The analytic two-qubit calculation agrees with the closed-form threshold to maximum
relative error $2.46\times10^{-14}$, and all non-boundary region assignments
match. \\
QAOA edge observable &
The baseline edge observable shows the predicted finite-shot regions; all
non-boundary region assignments match, with boundary-local disagreement $0.0484$
and grid-cell ambiguity fraction $0.1091$. \\
Mismatch scaling &
The CDR crossover follows the predicted monotone dependence on calibration
mismatch; the median observed slope is $0.8959$ with IQR
$[0.8864,0.9160]$ over $34$ calibration rows. \\
Boundary bootstrap &
Bootstrap uncertainty is concentrated near region boundaries: the grid-cell
ambiguity fraction is $0.1091$, the boundary-ambiguity fraction is $0.1935$, and the mean
relative crossover error is $0.1481$. \\
MaxCut Hamiltonian &
The full commuting MaxCut Hamiltonian preserves the three-region structure; all
non-boundary region assignments match, with boundary-local disagreement $0.0370$. \\
Instance robustness &
Across $21$ graph instances and $1008$ Monte Carlo cells, all non-boundary
region assignments match; finite crossover errors are below $0.30$ in $0.9302$ of
cases and at most $0.50$ in $0.9767$ of cases. \\
Pauli channels &
The finite-shot prediction persists for the N2a, N2b, and N3 Pauli-channel
variants with channel-specific PEC inverse one-norms; all non-boundary region
assignments match, the finite-error fraction below threshold is $0.875$,
and boundary-local disagreement is $0.0230$. \\
\botrule
\end{tabular}
\end{ruledtabular}
\end{table}

Table~\ref{tab:canonical_phase_results} gives the compact numerical evidence for
the main finite-shot result.  Additional analyses are grouped by physical role:
robustness perturbations of the baseline QAOA setting, second-order,
response-projection, and shot-allocation tests of the analytic corollaries,
offline Aer/FakeBackend device-model simulations, and an archived real-device
count analysis.  The detailed plots and tables for these extensions are reported in
Appendices~\ref{app:canonical_numerical_details},
\ref{sec:phase7_supplement}, \ref{app:phase8_plus_experimental_layers}, and
\ref{app:device_model_hardware_counts}.

\paragraph{Synthesis.}
Taken together, the numerical comparisons show that the predicted finite-shot
boundaries are visible in the baseline QAOA setting and persist under several
controlled perturbations.  The archived real-device count study is reported
separately as fixed-count evidence rather than as a new hardware experiment.
\section{Conclusion}
\label{sec:conclusion}

We have derived finite-shot MSE boundaries for exact PEC, linear CDR, and no
mitigation under explicit assumptions.  The central mechanism is a finite
operating window: an unbiased estimator with larger variance overhead can be
outperformed, below a finite crossover, by a biased estimator with smaller
target-shot variance.  For PEC versus CDR, the upper endpoint scales as
$1/(\delta_1^2p)$, so the CDR-dominant region is highly sensitive to the
first-order CDR calibration mismatch.

The calibration results identify when this mismatch is structural rather than a
finite-sample artifact.  Population linear CDR is first-order calibrated exactly
on an affine response hyperplane, and the target-response projection theorem shows
that response-blind affine training has an irreducible ensemble-level mismatch
unless the target noise response is affine in the ideal target value.  This also
yields a local statistical-decision lower bound for response-blind affine CDR.  When
first-order calibration holds, the second-order residual produces the next
crossover scale, proportional to $1/(\delta_2^2p^3)$.  The same MSE bookkeeping
extends to commuting Pauli Hamiltonians, finite CDR training shots with an
explicit target/training allocation rule, and residual PEC model bias.

The numerical comparisons support the predicted no-mitigation, CDR-dominant,
and PEC-dominant regions in the baseline QAOA setting.  Additional robustness,
device-model, and archived-count analyses show how the same finite-shot
structure changes under variations of observable, circuit instance, noise
channel, training design, and device-derived noise assumptions.

The resulting picture is local but predictive.  For CDR, the relevant object is
the specified affine training family or target-conditioned design; improved
training sets can change the calibration mismatch and therefore shift the
PEC-CDR crossover.  The device-model simulations show that realistic PEC
noise-model violation can remove a PEC-dominant region in the tested
configuration, while the archived real-device count analysis provides
fixed-count evidence consistent with the device-facing interpretation.

\section*{Data and code availability}

The reproduction code, numerical outputs, simulator outputs,
generated figures, Docker environment, verification scripts, and reference
checksums are archived under the versioned release tag \path{v1.0.2} and at
\doi{10.5281/zenodo.20765589}.  The local verification route and Docker
route reproduce the baseline outputs and recheck the secondary robustness and
Aer/FakeBackend device-model artifacts.  The GitHub Actions workflow is intended
to exercise the same release checks on Linux, macOS 14, and Windows.

The real-device count analysis is treated as archived-count evidence.  The
repository contains the frozen analysis JSON files and SHA-256 anchors.  The
four raw count files are distributed in the same Zenodo record.  Public
reproduction of the archived-count analysis consists of rerunning the analysis
from the archived counts and verifying their checksums.

The public repository is available at
\url{https://github.com/vicenzoscavino1999/finite-shot-pec-cdr}.  Code is
released under the MIT license; archived numerical data and generated figures
are released under CC-BY-4.0.

\section*{Additional statements}

\textbf{Funding.} The author received no funding for this research.

\textbf{Competing interests.} The author declares no competing interests.

\textbf{Author contributions.} Vicenzo Scavino is the sole author and is
responsible for the conception, argument, implementation, numerical validation,
drafting, revision, and final approval of the manuscript.

\textbf{AI assistance disclosure.} AI-assisted tools were used for language
editing, formatting support, code organization, and non-substantive structural
refinement.  The author reviewed, revised, verified, and approved the final
manuscript and accepts full responsibility for its content.

\begin{acknowledgments}
The author is deeply grateful to his parents, Luigi Stefano Scavino Montero
and Isabel Betzabe Alfaro Ninahuanca, and to his sister, Luciana Franceska
Scavino Alfaro, for their unwavering personal support and encouragement
throughout this work.
\end{acknowledgments}

\clearpage

\appendix

\section{Full help-harm sign cases}
\label{app:help_harm_cases}

The gain formula \eqref{eq:Delta_general} gives the following complete sign
classification:
\begin{center}
\small
\refstepcounter{table}
\label{tab:help_harm_sign_cases}
TABLE~\thetable. Complete help-harm sign cases for the generic finite-shot gain
\eqref{eq:Delta_general}.

\vspace{0.5em}
\begin{ruledtabular}
\begin{tabular}{c c c}
\toprule
Condition on $D_M$ & Condition on $A_M$ & Operating implication \\
\colrule
$D_M>0$ & $A_M>0$ & helps iff $B>A_M/D_M$ \\
$D_M>0$ & $A_M\le0$ & helps for every $B>0$ \\
$D_M=0$ & $A_M>0$ & harms for every $B>0$ \\
$D_M=0$ & $A_M=0$ & exact tie for every $B>0$ \\
$D_M=0$ & $A_M<0$ & helps for every $B>0$ \\
$D_M<0$ & $A_M\ge0$ & harms for every $B>0$ \\
$D_M<0$ & $A_M<0$ & helps iff $0<B<A_M/D_M$ \\
\botrule
\end{tabular}
\end{ruledtabular}
\end{center}
The last ratio is positive because both numerator and denominator are negative.

\clearpage

\section{Detailed CDR calibration and finite-training results}
\label{app:cdr_mathematical_details}

This appendix records the CDR coefficient expansion, response-projection results,
and finite-training bounds summarized in Section~\ref{sec:cdr_calibration}.  The
estimator class and assumptions are the same as in the main text; only the
algebraic and statistical details are deferred here.

\begin{lemma}[CDR response expansion]
\label{lem:cdr_expansion}
Under Assumption~\ref{ass:cdr_response}, let
\begin{align}
    V_y&=\Var_{\mathcal{T}}(y)>0,
    &
    C_{y\beta}&=\Cov_{\mathcal{T}}(y,\beta),\\
    C_{y\chi}&=\Cov_{\mathcal{T}}(y,\chi),
    &
    V_\beta&=\Var_{\mathcal{T}}(\beta),
\end{align}
\begin{align}
    \bar y&=\E_{\mathcal{T}}[y],
    &
    \bar\beta&=\E_{\mathcal{T}}[\beta],
    &
    \bar\chi&=\E_{\mathcal{T}}[\chi].
\end{align}
Then
\begin{align}
    a^*(p)&=1-a_1p+a_2p^2+O(p^3),\label{eq:a_expansion}\\
    b^*(p)&=-b_1p-b_2p^2+O(p^3),\label{eq:b_expansion}
\end{align}
where
\begin{equation}
    a_1=\frac{C_{y\beta}}{V_y},
    \qquad
    b_1=\bar\beta-a_1\bar y,
    \label{eq:a1b1}
\end{equation}
\begin{equation}
    a_2=2a_1^2-\frac{C_{y\chi}+V_\beta}{V_y},
    \qquad
    b_2=\bar\chi-a_1\bar\beta+a_2\bar y.
    \label{eq:a2b2}
\end{equation}
Consequently,
\begin{equation}
    b_{\mathrm{mis}}(p)=\delta_1p+\delta_2p^2+O(p^3),
    \qquad
    \delta_1=\beta_T-a_1\mu_0-b_1.
    \label{eq:delta1}
\end{equation}
with
\begin{equation}
    \delta_2=\chi_T-a_1\beta_T+a_2\mu_0-b_2.
    \label{eq:delta2}
\end{equation}
\end{lemma}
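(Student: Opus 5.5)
Since the population regression is ordinary least squares of $y$ on $x(p)$, I would start from the closed-form solution
\begin{equation}
    a^*(p)=\frac{\Cov_{\mathcal T}(y,x(p))}{\Var_{\mathcal T}(x(p))},
    \qquad
    b^*(p)=\E_{\mathcal T}[y]-a^*(p)\,\E_{\mathcal T}[x(p)],
\end{equation}
which is well defined for small $p$ because $\Var_{\mathcal T}(x(p))\to V_y>0$. Substituting the training expansion $x=y+\beta p+\chi p^2+O(p^3)$ from Assumption~\ref{ass:cdr_response}, with termwise expansion justified by the uniformity clause, gives two expansions. The covariance becomes $\Cov(y,x)=V_y+C_{y\beta}p+C_{y\chi}p^2+O(p^3)$. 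The variance becomes $\Var(x)=V_y+2C_{y\beta}p+(V_\beta+2C_{y\chi})p^2+O(p^3)$.

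The key step is to expand the ratio. Dividing through by $V_y$ and writing $c_1=C_{y\beta}/V_y=a_1$, $c_2=C_{y\chi}/V_y$, $v=V_\beta/V_y$, I would use $1/(1+u)=1-u+u^2+O(u^3)$:
\begin{equation}
    a^*=(1+a_1p+c_2p^2)\bigl(1-2a_1p-(v+2c_2)p^2+4a_1^2p^2\bigr)+O(p^3).
\end{equation}
Collecting terms, the first-order coefficient is $a_1-2a_1=-a_1$. The second-order coefficient is $c_2-v-2c_2+4a_1^2-2a_1^2=2a_1^2-(C_{y\chi}+V_\beta)/V_y$. This is exactly $a_2$ in \eqref{eq:a2b2}, with the sign convention $a^*=1-a_1p+a_2p^2$.

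For the intercept, I would substitute $\E[x]=\bar y+\bar\beta p+\bar\chi p^2$ and expand. The constant term is $\bar y-\bar y=0$. The $p$ coefficient is $-(\bar\beta-a_1\bar y)=-b_1$. The $p^2$ coefficient is $-(\bar\chi-a_1\bar\beta+a_2\bar y)=-b_2$. Both match \eqref{eq:a1b1} and \eqref{eq:a2b2}.

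Finally, I would insert $a^*$, $b^*$ and the target expansion \eqref{eq:target_expansion} into \eqref{eq:bmis_def}:
\begin{equation}
    a^*\mu_p+b^*-\mu_0=(1-a_1p+a_2p^2)(\mu_0+\beta_Tp+\chi_Tp^2)-b_1p-b_2p^2-\mu_0+O(p^3).
\end{equation}
The constant terms cancel. Reading off the order-$p$ term gives $\delta_1=\beta_T-a_1\mu_0-b_1$, and the order-$p^2$ term gives $\delta_2=\chi_T-a_1\beta_T+a_2\mu_0-b_2$. These are \eqref{eq:delta1} and \eqref{eq:delta2}.

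The only delicate point I expect is the second-order bookkeeping in the ratio expansion, in particular the cross term $(a_1p)(-2a_1p)$ together with the $4a_1^2p^2$ term from the geometric series. The remainder of the argument is direct substitution, given the uniform-remainder hypothesis, which is what permits interchanging expectation with the $O(p^3)$ terms.
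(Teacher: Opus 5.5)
Your proposal is correct and follows the paper's proof step for step. Like the paper, you start from the OLS closed form $a^*=\Cov(x,y)/\Var(x)$, $b^*=\bar y-a^*\E[x]$, expand the covariance and variance termwise, expand the ratio and the intercept, and then substitute into \eqref{eq:bmis_def}. Your explicit geometric-series bookkeeping for the $p^2$ coefficient, which the paper leaves implicit, checks out and gives $a_2=2a_1^2-(C_{y\chi}+V_\beta)/V_y$.
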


\begin{proof}
Population least squares gives
$a^*(p)=\Cov(x(p),y)/\Var(x(p))$ and
$b^*(p)=\E[y]-a^*(p)\E[x(p)]$.  Since
$x(p)=y+\beta p+\chi p^2+O(p^3)$,
\begin{align}
    \Cov(x(p),y)&=V_y+C_{y\beta}p+C_{y\chi}p^2+O(p^3),\\
    \Var(x(p))&=V_y+2C_{y\beta}p+(2C_{y\chi}+V_\beta)p^2+O(p^3).
\end{align}
Expanding the ratio gives \eqref{eq:a_expansion}.  The intercept expansion
follows from $b^*(p)=\bar y-a^*(p)(\bar y+\bar\beta p+\bar\chi p^2+O(p^3))$.
Substitution into \eqref{eq:bmis_def} gives \eqref{eq:delta1} and
\eqref{eq:delta2}.
\end{proof}

\begin{definition}[First-order calibrated CDR]
\label{def:first_order_calibrated}
The training distribution is first-order calibrated for the target if
\begin{equation}
    \delta_1=\beta_T-a_1\mu_0-b_1=0.
    \label{eq:first_order_calibration}
\end{equation}
Equivalently, if $\beta_{\mathrm{eff,tr}}(\mu_0)=a_1\mu_0+b_1$, calibration means
$\beta_T=\beta_{\mathrm{eff,tr}}(\mu_0)$.
\end{definition}

\begin{theorem}[First-order CDR calibration and OLS necessity]
\label{thm:cdr_generic}
Fix a training distribution satisfying Assumption~\ref{ass:cdr_response}.
\begin{enumerate}
    \item In the two-dimensional target-response coordinates
    $(\mu_0,\beta_T)$, the targets for which linear CDR is first-order
    calibrated form the affine hyperplane
\begin{equation}
    \mathcal{C}_{\mathrm{cal}}
    =\{(\mu_0,\beta_T):\beta_T=a_1\mu_0+b_1\}.
    \label{eq:calibration_hyperplane}
\end{equation}
    For every target outside this set,
\begin{equation}
    b_{\mathrm{mis}}(p)=\delta_1p+O(p^2),
    \qquad
    \delta_1\ne0.
\end{equation}
    Thus an $O(p^2)$ CDR residual is a structural calibration condition on the
    relation between the training distribution and the target.
    \item For the population OLS predictor
    $f_p(x)=a^*(p)x+b^*(p)$ learned from this fixed training distribution,
    the target residual satisfies
\begin{equation}
    f_p(\mu_p)-\mu_0=O(p^2)
\end{equation}
    if and only if the target is first-order calibrated,
    $\delta_1=\beta_T-a_1\mu_0-b_1=0$.
\end{enumerate}
The second statement applies to OLS-CDR trained from the fixed distribution
$\mathcal T$; target-specific affine maps belong to a different design class.
\end{theorem}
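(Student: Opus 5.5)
The plan is to read both parts off the expansion in Lemma~\ref{lem:cdr_expansion}, which already gives $b_{\mathrm{mis}}(p)=\delta_1p+\delta_2p^2+O(p^3)$ with $\delta_1=\beta_T-a_1\mu_0-b_1$. The only work is to check that $a_1$ and $b_1$ depend only on the training distribution $\mathcal T$. By \eqref{eq:a1b1} they are built from $V_y$, $C_{y\beta}$, $\bar\beta$ and $\bar y$, which are moments over $\mathcal T$ and do not involve the target. So once $\mathcal T$ is fixed, $\delta_1$ is an affine function of the target coordinates $(\mu_0,\beta_T)$.

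For part~1, the zero set of the affine function $(\mu_0,\beta_T)\mapsto\beta_T-a_1\mu_0-b_1$ is exactly $\mathcal C_{\mathrm{cal}}$ in \eqref{eq:calibration_hyperplane}. It is a genuine affine line in $\mathbb R^2$ because the coefficient of $\beta_T$ is $1\neq0$. For a target off this line, $\delta_1\neq0$. Truncating the lemma's expansion after the linear term, and absorbing $\delta_2p^2+O(p^3)$ into $O(p^2)$, gives $b_{\mathrm{mis}}(p)=\delta_1p+O(p^2)$ with nonzero leading coefficient.

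For part~2, note that $f_p(\mu_p)-\mu_0=a^*(p)\mu_p+b^*(p)-\mu_0=b_{\mathrm{mis}}(p)$ by \eqref{eq:bmis_def}.
\begin{itemize}
\item If $\delta_1=0$, the lemma gives $b_{\mathrm{mis}}(p)=\delta_2p^2+O(p^3)=O(p^2)$.
\item Conversely, suppose $b_{\mathrm{mis}}(p)=O(p^2)$. Then $b_{\mathrm{mis}}(p)/p\to0$ as $p\to0^+$. But the expansion gives $b_{\mathrm{mis}}(p)/p=\delta_1+O(p)\to\delta_1$. Uniqueness of the limit forces $\delta_1=0$.
\end{itemize}
I will also note that the claim concerns the OLS coefficients $a^*(p),b^*(p)$ of the fixed $\mathcal T$. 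Target-dependent affine maps are not covered, because their slope and intercept need not have the expansion with these $a_1,b_1$.

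The main obstacle is making sure the lemma's expansion is legitimate, since everything above rests on it. This needs the termwise expansion of covariances (Assumption~\ref{ass:cdr_response}) and $V_y>0$, so that the ratio $\Cov(x,y)/\Var(x)$ is analytic near $p=0$ with $\Var(x(p))$ bounded away from zero. The lemma states both, so I will invoke it directly. I will only remark that the $O(p^2)$ remainder in part~1 is uniform in the target once $\chi_T$ and the target remainder are bounded.
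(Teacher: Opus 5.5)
Your proposal is correct and follows the paper's proof essentially step for step. Both read $\delta_1=\beta_T-a_1\mu_0-b_1$ off Lemma~\ref{lem:cdr_expansion}, identify its zero set with $\mathcal C_{\mathrm{cal}}$, and get the converse in part~2 because the coefficient of $p$ in $f_p(\mu_p)-\mu_0=b_{\mathrm{mis}}(p)$ must vanish; your limit $b_{\mathrm{mis}}(p)/p\to\delta_1$ just makes that explicit. The paper additionally remarks that the complement of $\mathcal C_{\mathrm{cal}}$ is open and dense and that $\mathcal C_{\mathrm{cal}}$ has Lebesgue measure zero, which supports the ``structural'' reading but is not needed for the stated claims.
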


\begin{proof}
Lemma~\ref{lem:cdr_expansion} gives
$\delta_1=\beta_T-a_1\mu_0-b_1$.  Therefore $\delta_1=0$ is exactly the affine
constraint \eqref{eq:calibration_hyperplane}.  Its complement is open and dense
in the two-dimensional target-response coordinates, and the hyperplane has
Lebesgue measure zero.  The same lemma gives
$f_p(\mu_p)-\mu_0=\delta_1p+O(p^2)$ for the population OLS predictor.
If $\delta_1=0$, the residual is $O(p^2)$.  Conversely, if the residual is
$O(p^2)$, the coefficient of $p$ in this expansion must vanish, so
$\delta_1=0$.
\end{proof}

\begin{corollary}[Second-order CDR calibration hierarchy]
\label{cor:second_order_cdr}
Assume the first-order calibration condition $\delta_1=0$ holds for the target.
If $\delta_2\ne0$, then population OLS-CDR has residual bias
\begin{equation}
    b_{\mathrm{mis}}(p)=\delta_2p^2+O(p^3).
\end{equation}
If, in addition,
\begin{equation}
    A_{\PEC}(p)-A_{\CDR}(p)
    =(\kappa_{\PEC}-\kappa_{\CDR})p+O(p^2),
    \qquad \kappa_{\PEC}>\kappa_{\CDR},
\end{equation}
then the PEC-CDR crossover obeys
\begin{equation}
    B_{\PEC=\CDR}^{(2)}(p)
    =
    \frac{\kappa_{\PEC}-\kappa_{\CDR}}{\delta_2^2}\frac{1}{p^3}
    +O(p^{-2}).
    \label{eq:second_order_crossover}
\end{equation}
Under first-order calibration, the finite-shot dominance pattern is controlled by
the next nonzero response coefficient.
\end{corollary}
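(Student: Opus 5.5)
The argument has two parts. The residual statement follows directly from Lemma~\ref{lem:cdr_expansion}. The crossover statement follows from the exact PEC--CDR crossover formula \eqref{eq:pec_cdr_crossover_budget}, with the local expansions substituted.

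First, under Assumption~\ref{ass:cdr_response}, Lemma~\ref{lem:cdr_expansion} gives $b_{\mathrm{mis}}(p)=\delta_1p+\delta_2p^2+O(p^3)$. The hypothesis $\delta_1=0$ removes the linear term, which leaves $b_{\mathrm{mis}}(p)=\delta_2p^2+O(p^3)$. Since $\delta_2\ne0$, this is the exact leading order. Squaring gives
\begin{equation}
    b_{\mathrm{mis}}^2(p)=\delta_2^2p^4\bigl(1+O(p)\bigr),
\end{equation}
where the relative error is $O(p)$ because it comes from $2\delta_2 c\,p^5/(\delta_2^2p^4)$, with $c$ the $p^3$ coefficient.

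Second, I use $b_N^2-D_{\CDR}=b_{\mathrm{mis}}^2$ as in Section~\ref{sec:phase_diagram}. The indifference budget is then $B_{\PEC=\CDR}(p)=(A_{\PEC}-A_{\CDR})/b_{\mathrm{mis}}^2$, and I substitute the assumed expansion $A_{\PEC}-A_{\CDR}=(\kappa_{\PEC}-\kappa_{\CDR})p+O(p^2)$. This gives
\begin{equation}
    B^{(2)}_{\PEC=\CDR}(p)=\frac{(\kappa_{\PEC}-\kappa_{\CDR})p\,(1+O(p))}{\delta_2^2p^4\,(1+O(p))}
    =\frac{\kappa_{\PEC}-\kappa_{\CDR}}{\delta_2^2}\frac{1}{p^3}\bigl(1+O(p)\bigr),
\end{equation}
and the relative $O(p)$ correction becomes the absolute $O(p^{-2})$ remainder in \eqref{eq:second_order_crossover}. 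Because $\kappa_{\PEC}>\kappa_{\CDR}$ and $\delta_2^2>0$, both numerator and denominator are positive for sufficiently small $p>0$. The right-hand side is therefore positive, which is the proviso under which \eqref{eq:pec_cdr_crossover_budget} was stated.

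For the final sentence on the dominance pattern, I would follow the argument of Theorem~\ref{thm:generic_bias_variance_window} with $U=\PEC$ and $L=\CDR$. The gain gap $\Delta_{\CDR}-\Delta_{\PEC}=b_{\mathrm{mis}}^2-(A_{\CDR}-A_{\PEC})/B$ is strictly monotone in $B$ once both coefficients are positive. The crossing is therefore unique, with CDR preferred below it and PEC above it.

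The only delicate step is bookkeeping of orders. Dividing by $p^4$ magnifies the remainders, so I need the $O(p^3)$ term in $b_{\mathrm{mis}}$ to produce only a relative $O(p)$ error. This requires the third-order uniformity in Assumption~\ref{ass:cdr_response}, which ensures the Lemma's remainder really is $O(p^3)$ rather than merely $o(p^2)$. I would state this dependence explicitly.
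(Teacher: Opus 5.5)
Your proposal is correct and follows the same route as the paper: Lemma~\ref{lem:cdr_expansion} gives the residual, and you then substitute $b_{\mathrm{mis}}^2=\delta_2^2p^4+O(p^5)$ and the overhead expansion into \eqref{eq:pec_cdr_crossover_budget}; your relative-error bookkeeping and positivity check just make explicit what the paper leaves implicit. One small slip in the optional dominance paragraph: the gap is $\Delta_{\CDR}-\Delta_{\PEC}=(A_{\PEC}-A_{\CDR})/B-b_{\mathrm{mis}}^2$, whereas your displayed expression flips the sign of $b_{\mathrm{mis}}^2$ and, as written, would never vanish, although your stated conclusion (CDR preferred below the crossover, PEC above it) is the right one.
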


\begin{proof}
The residual expansion follows from Lemma~\ref{lem:cdr_expansion}.  The
PEC-CDR crossover is
$B_{\PEC=\CDR}=(A_{\PEC}-A_{\CDR})/b_{\mathrm{mis}}^2$, derived in
\eqref{eq:pec_cdr_crossover_budget} below.  Substituting
$b_{\mathrm{mis}}^2=\delta_2^2p^4+O(p^5)$ and the stated variance-overhead
expansion gives \eqref{eq:second_order_crossover}.
\end{proof}

The second-order crossover scale \eqref{eq:second_order_crossover} is a
falsifiable prediction of the calibration hierarchy.  The primary numerical
comparisons use nonzero first-order mismatch, while the supplementary
second-order tests probe this regime directly.

\begin{proposition}[Calibration/noise-response circularity]
\label{prop:cdr_circularity}
For a fixed population linear CDR training distribution, first-order calibration
of a target is equivalent to matching the target's first-order noise response to
the affine response learned from training:
\begin{equation}
    \beta_T=\beta_{\mathrm{eff,tr}}(\mu_0)=a_1\mu_0+b_1.
\end{equation}
Consequently, a fixed training distribution satisfies $\delta_1=0$
simultaneously for every target in a class $\mathcal{C}$ of response pairs
$(\mu_0,\beta_T)$ if and only if $\mathcal{C}$ is contained in the single
affine line $\{\beta_T=a_1\mu_0+b_1\}$.  Equivalently, simultaneous first-order
calibration by one fixed training distribution requires collinearity of the
target response pairs.
\end{proposition}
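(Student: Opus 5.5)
The plan is to reduce everything to the closed form of the first-order residual from Lemma~\ref{lem:cdr_expansion}. For a fixed training distribution $\mathcal T$, the population coefficients $a_1=C_{y\beta}/V_y$ and $b_1=\bar\beta-a_1\bar y$ depend only on $\mathcal T$, not on the target. The target enters only through the pair $(\mu_0,\beta_T)$. Hence, for every target, $\delta_1=\beta_T-(a_1\mu_0+b_1)=\beta_T-\beta_{\mathrm{eff,tr}}(\mu_0)$. The first claim, that $\delta_1=0$ is equivalent to $\beta_T=\beta_{\mathrm{eff,tr}}(\mu_0)$, is then immediate. It is exactly Definition~\ref{def:first_order_calibrated} restated. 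It can also be read off from part~1 of Theorem~\ref{thm:cdr_generic}, which identifies the calibrated set with the line $\mathcal C_{\mathrm{cal}}$.

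For the class statement I will argue both directions using the fact that $(a_1,b_1)$ is target-independent. \emph{If} $\mathcal C\subseteq\{\beta_T=a_1\mu_0+b_1\}$, then every pair in $\mathcal C$ has $\delta_1=0$ by the display above. \emph{Only if}: suppose $\delta_1=0$ for every $(\mu_0,\beta_T)\in\mathcal C$. Then each pair satisfies $\beta_T=a_1\mu_0+b_1$ with the \emph{same} $(a_1,b_1)$, because one fixed $\mathcal T$ determines these coefficients for all targets simultaneously. So $\mathcal C$ lies in that single line.

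The collinearity reformulation follows by taking contrapositives. If $\mathcal C$ contains three non-collinear response pairs, no single line contains them, so no fixed $\mathcal T$ calibrates all of them. If $\mathcal C$ contains two pairs with equal $\mu_0$ but different $\beta_T$, no line of the form $\beta_T=a_1\mu_0+b_1$ contains both, since such lines are graphs over $\mu_0$. I will state this caveat explicitly: \emph{collinearity} here means lying on a non-vertical line in the $(\mu_0,\beta_T)$ plane.

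The main obstacle is interpretive rather than computational, namely the converse of the equivalence. A collinear class requires a training distribution whose population slope and intercept $(a_1,b_1)$ reproduce that line. The proposition as phrased only asserts necessity of collinearity for a \emph{given} fixed $\mathcal T$. Realizability of an arbitrary line by some $\mathcal T$ is therefore not claimed, and I will not try to prove it. I would note that, for example, taking $\mathcal T$ supported on training circuits whose $(y,\beta)$ pairs lie on the target line gives $a_1,b_1$ equal to that line's slope and intercept, by exact least squares on collinear data with $V_y>0$. That would be a remark, not part of the proof.
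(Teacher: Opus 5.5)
Your proposal is correct and follows essentially the same route as the paper, which reads the equivalence off Definition~\ref{def:first_order_calibrated} and Theorem~\ref{thm:cdr_generic} and observes that simultaneous calibration places every pair of $\mathcal{C}$ on the single line fixed by the target-independent $(a_1,b_1)$. Your two additions go beyond the paper's statement and are both valid: the caveat that the relevant collinearity is along a non-vertical line in the $(\mu_0,\beta_T)$ plane, and the realizability remark that collinear training data give $a_1=s$, $b_1=t$.
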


\begin{proof}
The equivalence is exactly Definition~\ref{def:first_order_calibrated} and
Theorem~\ref{thm:cdr_generic}.  Simultaneous calibration for every target in
$\mathcal{C}$ means that every pair $(\mu_0,\beta_T)\in\mathcal{C}$ lies on the
affine calibration set \eqref{eq:calibration_hyperplane}, which is a single
affine line determined by $(a_1,b_1)$.
\end{proof}

\begin{remark}
This is the conceptual tension behind CDR calibration.  CDR is attractive partly
because it avoids an explicit noise inverse, but first-order calibrated CDR
requires the training family to carry the same first-order noise-response
information that determines the target's leading noisy drift: a
target-conditioned design that guarantees $\delta_1=0$ for a noncollinear
target class must encode that information, either analytically through a noise
model or empirically through additional calibration data.  Successful training
sets of this kind therefore express a calibration property rather than a generic
consequence of using a linear regression.
\end{remark}

\begin{theorem}[Target-response projection theorem for CDR mismatch]
\label{thm:response_projection}
Let targets be drawn from an ensemble $\Pi$ over first-order response pairs
$(\mu_0,\beta_T)$ with finite second moments.  Any fixed population linear CDR
training design induces an affine first-order response
\begin{equation}
    h(\mu)=a_1\mu+b_1
\end{equation}
and hence ensemble-average squared first-order mismatch
\begin{equation}
    \mathcal E(h)=
    \E_{\Pi}\!\left[(\beta_T-h(\mu_0))^2\right].
\end{equation}
Among all affine first-order responses, the optimal mismatch is
\begin{equation}
    \mathcal E_{\mathrm{aff}}
    =
    \inf_{a,b}\E_{\Pi}\!\left[(\beta_T-a\mu_0-b)^2\right].
    \label{eq:affine_response_projection_error}
\end{equation}
Every response-blind training design therefore satisfies
$\mathcal E(h)\ge\mathcal E_{\mathrm{aff}}$; attaining
$\mathcal E_{\mathrm{aff}}$ requires the $L^2(\Pi)$ projection minimizer to be
realizable by an admissible training family.
Equivalently,
\begin{equation}
    \mathcal E_{\mathrm{aff}}
    =
    \E_{\Pi}\!\left[\Var(\beta_T\mid \mu_0)\right]
    +
    \inf_{a,b}\E_{\Pi}\!\left[
    \left(\E[\beta_T\mid\mu_0]-a\mu_0-b\right)^2
    \right].
    \label{eq:response_projection_decomposition}
\end{equation}
Thus $\mathcal E_{\mathrm{aff}}=0$ if and only if
$\beta_T=a\mu_0+b$ almost surely for some affine function.  Outside this
condition, every response-blind population linear CDR design has a nonzero
ensemble-average first-order bias floor.  Moreover, if the conditional law of
$\beta_T$ given $\mu_0$ is non-atomic for $\Pi$-almost every $\mu_0$, then for
any fixed affine response $h$,
\begin{equation}
    \Pr_{\Pi}\{\delta_1=0\}=
    \Pr_{\Pi}\{\beta_T=h(\mu_0)\}=0.
\end{equation}
\end{theorem}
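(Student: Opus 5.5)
The plan is to treat the statement as an $L^2(\Pi)$ projection problem and to proceed in four steps.

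\textbf{Step 1 (the lower bound).} By Lemma~\ref{lem:cdr_expansion}, any fixed population linear CDR training design yields coefficients $(a_1,b_1)$ that depend only on the training distribution $\mathcal T$, not on the target. Hence $\delta_1=\beta_T-h(\mu_0)$ with $h$ a single affine function applied to every target of $\Pi$; this is what ``response-blind'' means. Since $h$ is one admissible affine map, $\mathcal E(h)\ge\inf_{a,b}\E_\Pi[(\beta_T-a\mu_0-b)^2]=\mathcal E_{\mathrm{aff}}$ is immediate. Finite second moments make every such objective finite. The infimum is attained: it is an ordinary least-squares problem, so the minimizer is $a=\Cov_\Pi(\mu_0,\beta_T)/\Var_\Pi(\mu_0)$, or any $a$ if $\Var_\Pi(\mu_0)=0$, with $b$ fixed by matching means. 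Equality $\mathcal E(h)=\mathcal E_{\mathrm{aff}}$ therefore requires $(a_1,b_1)$ to coincide with a minimizer, that is, realizability.

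\textbf{Step 2 (the decomposition).} Write $m(\mu_0)=\E[\beta_T\mid\mu_0]$, which exists in $L^2(\Pi)$. For any $(a,b)$, expand
\[
\beta_T-a\mu_0-b=(\beta_T-m(\mu_0))+(m(\mu_0)-a\mu_0-b).
\]
The cross term has zero expectation by the tower property, since the second factor is $\mu_0$-measurable. This gives the Pythagorean identity $\E[(\beta_T-a\mu_0-b)^2]=\E[\Var(\beta_T\mid\mu_0)]+\E[(m-a\mu_0-b)^2]$. The first term does not depend on $(a,b)$, so taking the infimum yields \eqref{eq:response_projection_decomposition}.

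\textbf{Step 3 (the zero characterization).} If $\beta_T=a\mu_0+b$ almost surely, then $\mathcal E_{\mathrm{aff}}=0$ trivially. Conversely, suppose $\mathcal E_{\mathrm{aff}}=0$. Because the infimum is attained (Step 1), some $(a,b)$ satisfies $\E[(\beta_T-a\mu_0-b)^2]=0$, hence $\beta_T=a\mu_0+b$ almost surely. Attainment is the one point needing care: without it one would only get an approximating sequence. It can alternatively be justified by closedness of the finite-dimensional subspace $\mathrm{span}\{1,\mu_0\}$ in $L^2(\Pi)$. When the identity fails, $\mathcal E(h)\ge\mathcal E_{\mathrm{aff}}>0$, and this is exactly the nonzero ensemble-average floor.

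\textbf{Step 4 (the measure-zero statement).} Fix $h$ and disintegrate:
\[
\Pr_\Pi\{\beta_T=h(\mu_0)\}=\E_\Pi\bigl[\Pr(\beta_T=h(\mu_0)\mid\mu_0)\bigr].
\]
For $\Pi$-a.e.\ $\mu_0$, the inner probability is the conditional mass of the single point $h(\mu_0)$, which is zero by non-atomicity. Integrating gives zero. The main (mild) obstacle is rigor in the disintegration: one needs regular conditional distributions, which exist because $(\mu_0,\beta_T)$ takes values in $\mathbb R^2$, a Polish space. One also needs measurability of the graph $\{\beta=h(\mu)\}$, which holds since $h$ is continuous.
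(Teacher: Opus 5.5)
Your proposal is correct and follows essentially the same route as the paper: $\delta_1=\beta_T-h(\mu_0)$ from Lemma~\ref{lem:cdr_expansion}, the $L^2(\Pi)$ projection onto the affine span of $(1,\mu_0)$, the total-variance (Pythagorean) decomposition, and a conditional single-point argument under the non-atomic law for the final claim. Your Step~3 explicitly uses that the least-squares infimum is attained, which is slightly more careful than the paper's ``both terms vanish'' argument; that argument tacitly needs the same attainment to conclude that the conditional mean is exactly affine.
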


\begin{proof}
For a fixed training design, Lemma~\ref{lem:cdr_expansion} gives
$\delta_1=\beta_T-h(\mu_0)$, so the ensemble-average squared first-order
mismatch is $\mathcal E(h)$.  Minimizing over affine first-order responses is
exactly the $L^2(\Pi)$ projection of $\beta_T$ onto the affine span of
$(1,\mu_0)$, which gives \eqref{eq:affine_response_projection_error}.  Applying
the law of total variance to
$\beta_T-a\mu_0-b$ gives
\eqref{eq:response_projection_decomposition}.  The right-hand side is zero only
when both terms vanish: first, $\beta_T=\E[\beta_T\mid\mu_0]$ almost surely, and
second, this conditional mean is affine in $\mu_0$.  The final claim follows
because, conditional on such a $\mu_0$, the event
$\{\beta_T=h(\mu_0)\}$ is a single point under a non-atomic conditional law.
\end{proof}

\begin{corollary}[Local Bayes/minimax lower bound for response-blind affine CDR]
\label{cor:bayes_minimax_cdr}
Let $\mathcal{T}$ be a compact target class and let $\Pi$ be any distribution
supported on $\mathcal{T}$.  Consider the response-blind affine class
\begin{equation}
    \mathcal{H}_{\mathrm{aff}}(M)
    =\{h(\mu)=a\mu+b:\ |a|+|b|\le M\}
\end{equation}
of first-order responses, with $M$ chosen large enough to contain an
$L^2(\Pi)$ projection minimizer defining
$\mathcal E_{\mathrm{aff}}$ in
Theorem~\ref{thm:response_projection}.  The constraint $|a|+|b|\le M$ makes the
expansion remainders below uniform over the class; containing a projection
minimizer makes the bound asymptotically tight, while the displayed
inequalities hold for every $M>0$ because
$\mathcal E(h)\ge\mathcal E_{\mathrm{aff}}$ for every affine $h$.  Assume the
expansions above hold uniformly on $\mathcal{T}$.  Let
\begin{equation}
    \bar\sigma_0^2=\E_{\Pi}[1-\mu_0^2].
\end{equation}
For $h(\mu)=a\mu+b\in\mathcal{H}_{\mathrm{aff}}(M)$, let the truncated local CDR
estimator be
\begin{equation}
    f_p(x)=(1-ap)x-bp,
\end{equation}
applied to the target sample mean, and write $\MSE_h(p,B)$ for its target-shot
MSE.  Then there is a constant $C$ independent of $p$, $B$, and
$h\in\mathcal{H}_{\mathrm{aff}}(M)$ such that, for sufficiently small $p$,
\begin{equation}
    \inf_{h\in\mathcal{H}_{\mathrm{aff}}(M)}
    \E_{\Pi}[\MSE_h(p,B)]
    \ge
    \mathcal E_{\mathrm{aff}}p^2+\frac{\bar\sigma_0^2}{B}
    -C\!\left(p^3+\frac{p}{B}\right).
    \label{eq:bayes_cdr_lower_bound}
\end{equation}
Consequently,
\begin{equation}
    \inf_{h\in\mathcal{H}_{\mathrm{aff}}(M)}
    \sup_{T\in\mathcal{T}}\MSE_h(p,B)
    \ge
    \mathcal E_{\mathrm{aff}}p^2+\frac{\bar\sigma_0^2}{B}
    -C\!\left(p^3+\frac{p}{B}\right).
    \label{eq:minimax_cdr_lower_bound}
\end{equation}
\end{corollary}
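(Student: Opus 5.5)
The argument is a direct bias--variance decomposition of $\MSE_h(p,B)$ for a single target $T$, expanded in $p$ using the uniform expansions on $\mathcal T$. After that, one takes the $\Pi$-average and applies Theorem~\ref{thm:response_projection} to the squared-bias term.

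\textbf{Step 1: decompose a single target.} Fix $T$ with response pair $(\mu_0,\beta_T)$ and $h(\mu)=a\mu+b\in\mathcal H_{\mathrm{aff}}(M)$. Because $f_p$ is affine and the target sample mean $\bar Y_B$ is unbiased for $\mu_p$ with variance $(1-\mu_p^2)/B$,
\begin{equation}
    \MSE_h(p,B)=\bigl(f_p(\mu_p)-\mu_0\bigr)^2+(1-ap)^2\frac{1-\mu_p^2}{B}.
\end{equation}

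\textbf{Step 2: the bias term.} Substituting $\mu_p=\mu_0+\beta_Tp+R_T(p)$ with $|R_T(p)|\le C_1p^2$ uniformly on $\mathcal T$ gives
\begin{equation}
    f_p(\mu_p)-\mu_0=\bigl(\beta_T-a\mu_0-b\bigr)p+r,\qquad |r|\le C_2p^2.
\end{equation}
The constant $C_2$ depends only on $C_1$, on $M$ (through $|a|\le M$ in the cross term $-ap(\beta_Tp+R_T)$), and on $\sup_{\mathcal T}|\beta_T|$, which is finite because $\mathcal T$ is compact and the expansion is uniform. Squaring, and using $(u+r)^2\ge u^2-2|u||r|$ with $|u|\le(\sup|\beta_T|+M)p$, yields
\begin{equation}
    \text{bias}^2\ge(\beta_T-h(\mu_0))^2p^2-C_3p^3.
\end{equation}

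\textbf{Step 3: the variance term.} We have $(1-ap)^2\ge 1-2Mp$. Moreover $1-\mu_p^2\ge 1-\mu_0^2-C_4p$ uniformly, since $|\mu_p^2-\mu_0^2|\le 2|\mu_p-\mu_0|$ and $|\mu_p-\mu_0|=O(p)$ uniformly. Together,
\begin{equation}
    \text{variance}\ge\frac{1-\mu_0^2}{B}-C_5\frac{p}{B}
\end{equation}
for small $p$, using $1-\mu_0^2\le1$.

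\textbf{Step 4: average and optimize.} Averaging over $\Pi$ gives
\begin{equation}
    \E_\Pi[\MSE_h]\ge\mathcal E(h)p^2+\frac{\bar\sigma_0^2}{B}-C\!\left(p^3+\frac{p}{B}\right).
\end{equation}
Here $C$ does not depend on $h$, since only $M$ and the uniform constants enter. Theorem~\ref{thm:response_projection} gives $\mathcal E(h)\ge\mathcal E_{\mathrm{aff}}$ for every affine $h$. Taking the infimum over $h$ then yields \eqref{eq:bayes_cdr_lower_bound}.

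\textbf{Step 5: the minimax bound.} This is the standard Bayes-to-minimax step \cite{Berger1985}. For each $h$, $\sup_{T\in\mathcal T}\MSE_h\ge\E_\Pi[\MSE_h]$ because $\Pi$ is supported on $\mathcal T$. Taking the infimum over $h$ on both sides gives \eqref{eq:minimax_cdr_lower_bound}.

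\textbf{Main obstacle.} The delicate point is the uniformity of the remainders in Steps 2 and 3 over both $T\in\mathcal T$ and $h\in\mathcal H_{\mathrm{aff}}(M)$. This is exactly where compactness of $\mathcal T$ and the bound $|a|+|b|\le M$ are used, so I would track these constants explicitly. A second point to handle is that $\Pi$-integrability of $(\beta_T-h(\mu_0))^2$ is automatic here, since $\beta_T$ is bounded on $\mathcal T$; this keeps the application of the projection theorem legitimate.
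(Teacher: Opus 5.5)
Your proposal is correct and follows essentially the same route as the paper. Both arguments use a per-target bias--variance decomposition with remainders uniform over $\mathcal T$ and $\mathcal H_{\mathrm{aff}}(M)$, average over $\Pi$, invoke $\mathcal E(h)\ge\mathcal E_{\mathrm{aff}}$, and finish with the sup-dominates-average Bayes-to-minimax step; your one-sided lower bounds with explicit constants are just a more careful version of the paper's two-sided $O(p^3+p/B)$ expansion.
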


\begin{proof}
For a fixed $h\in\mathcal{H}_{\mathrm{aff}}(M)$, the bias of the truncated
estimator is
\begin{equation}
    b_{\mathrm{mis},h}(p)
    =(1-ap)\mu_p-bp-\mu_0
    =(\beta_T-h(\mu_0))p+O(p^2),
\end{equation}
with the $O(p^2)$ constant uniform on $\mathcal{T}$ and over
$\mathcal{H}_{\mathrm{aff}}(M)$, because $|a|+|b|\le M$ and the target
expansions are uniform on the compact class.  The slope satisfies
$1-ap=1+O(p)$ uniformly, so the target-shot variance term is
$(1-\mu_0^2)/B+O(p/B)$, again uniformly.  Averaging over $\Pi$ gives
\begin{equation}
    \E_{\Pi}[\MSE_h(p,B)]
    =
    \E_{\Pi}[(\beta_T-h(\mu_0))^2]p^2
    +\frac{\bar\sigma_0^2}{B}
    +O\!\left(p^3+\frac{p}{B}\right).
\end{equation}
Taking the infimum over $h\in\mathcal H_{\mathrm{aff}}(M)$ and using
$\mathcal E(h)\ge\mathcal E_{\mathrm{aff}}$ gives the Bayes-risk bound.  Since
$\Pi$ is supported on $\mathcal{T}$, $\sup_{T\in\mathcal{T}}\MSE_h\ge
\E_{\Pi}[\MSE_h]$ for each $h$, and taking the infimum gives
\eqref{eq:minimax_cdr_lower_bound}.
\end{proof}

\begin{remark}
Corollary~\ref{cor:bayes_minimax_cdr} applies to response-blind affine
mitigation.  Nonlinear, adaptive, and target-informed designs form different
hypothesis classes.  A target-informed CDR design that incorporates the target's
first-order response can achieve $\delta_1=0$ and lies outside this
response-blind class; the lower bound quantifies the value of target-response
information.
\end{remark}

\begin{remark}[Training-design classes]
\label{rem:training_design_classes}
Theorem~\ref{thm:response_projection} gives a quantitative version of
``generic mismatch.''  A response-blind training design fixes an affine response
$h$ and carries the projection floor $\mathcal E_{\mathrm{aff}}$.  A
structure-matched design can reduce the first term empirically by making the
training circuits' response closer to the target family; the affine projection
term is removed when the target response is affine in $\mu_0$ on the tested
family.  A target-response-informed design can set $\delta_1=0$ for a chosen
target by supplying the response information $\beta_T-h(\mu_0)$, either from a
noise model, additional calibration experiments, or an explicitly engineered
training family.  This is the sense in which first-order calibrated CDR is
possible with target-response information in the first-order response variable.
\end{remark}

\begin{proposition}[Affine-reparametrization converse]
\label{thm:affine_converse}
Let $x\mapsto x'=\alpha x+\eta$, $\alpha\ne0$, be any invertible affine
reparametrization of the noisy expectation coordinate, applied consistently to
training and target circuits.  The population linear CDR prediction is invariant:
\begin{equation}
    a'^{*}(p)(\alpha\mu_{T,p}+\eta)+b'^{*}(p)-\mu_0
    =a^*(p)\mu_{T,p}+b^*(p)-\mu_0 .
\end{equation}
Consequently, if $\delta_1\ne0$, no invertible affine rescaling or shifting of
$\mu_p$ can make linear CDR first-order calibrated for that target.
\end{proposition}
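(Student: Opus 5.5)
The plan is to work directly with the population OLS formulas $a^*(p)=\Cov(x(p),y)/\Var(x(p))$ and $b^*(p)=\E[y]-a^*(p)\E[x(p)]$ used in the proof of Lemma~\ref{lem:cdr_expansion}. Replacing every noisy coordinate by $x'=\alpha x+\eta$ (training and target alike), the training covariance and variance transform as
\begin{equation}
\Cov(x',y)=\alpha\Cov(x,y),\qquad \Var(x')=\alpha^2\Var(x).
\end{equation}
Since $\alpha\ne0$ and $\Var(x(p))>0$ for small $p$ (by $V_y>0$ and the expansion in Lemma~\ref{lem:cdr_expansion}), the reparametrized problem is well posed. Its slope is $a'^*(p)=a^*(p)/\alpha$.

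For the intercept, I would compute
\begin{equation}
b'^*(p)=\E[y]-\frac{a^*(p)}{\alpha}\bigl(\alpha\E[x]+\eta\bigr)=b^*(p)-\frac{a^*(p)\eta}{\alpha}.
\end{equation}
Substituting into the target prediction gives
\begin{equation}
\frac{a^*(p)}{\alpha}(\alpha\mu_{T,p}+\eta)+b^*(p)-\frac{a^*(p)\eta}{\alpha}-\mu_0=a^*(p)\mu_{T,p}+b^*(p)-\mu_0 .
\end{equation}
This is the displayed invariance. An alternative route reaches the same conclusion more conceptually: the class of affine maps of $x'$ coincides with the class of affine maps of $x$, via $(a',b')\leftrightarrow(a'\alpha,\,a'\eta+b')$. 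The least-squares objective is therefore the same functional on the same function class. Its minimizer is unique because the variance is positive, so the fitted function $x\mapsto f_p(x)$ is unchanged.

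For the consequence, the invariance shows that the reparametrized residual $b'_{\mathrm{mis}}(p)$ equals $b_{\mathrm{mis}}(p)$ identically in $p$. By Lemma~\ref{lem:cdr_expansion} and Theorem~\ref{thm:cdr_generic}(2), $b_{\mathrm{mis}}(p)=\delta_1p+O(p^2)$. Hence the reparametrized first-order coefficient is the same $\delta_1\ne0$, and first-order calibration cannot be achieved.

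The only delicate point is the scope of ``applied consistently''. If $\alpha$ or $\eta$ depend on $p$, the argument still works pointwise in $p$, since the invariance holds for each fixed $p$ separately. I would state this explicitly. I would also note that a target-only rescaling, which breaks consistency with the training circuits, lies outside the proposition. I expect no real obstacle beyond checking nondegeneracy of $\Var(x(p))$ for small $p$.
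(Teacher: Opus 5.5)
Your proposal is correct and follows essentially the paper's argument. The paper writes down the transformed OLS predictor $f'_p(x')=(a^*(p)/\alpha)x'+b^*(p)-a^*(p)\eta/\alpha$ and notes that $f'_p(\alpha x+\eta)=a^*(p)x+b^*(p)$ holds pointwise in $p$, which also covers $p$-dependent $\alpha(p),\eta(p)$; your covariance/variance computation derives exactly this, and your explicit check that $\Var(x(p))>0$ for small $p$ and your spelling-out of the identical-residual consequence usefully make explicit what the paper leaves implicit.
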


\begin{proof}
OLS predictions are invariant under an invertible affine change of the predictor
coordinate: the transformed predictor is
$f'_p(x')=(a^*(p)/\alpha)x'+b^*(p)-a^*(p)\eta/\alpha$, so
$f'_p(\alpha x+\eta)=a^*(p)x+b^*(p)$ for every $x$.  Since the argument is
pointwise in $p$, the same invariance holds for $p$-dependent
reparametrizations $x\mapsto\alpha(p)x+\eta(p)$ with $\alpha(p)\ne0$.
\end{proof}

\begin{assumption}[Linearized finite-training CDR]
\label{ass:finite_training_cdr}
Let $\theta^*(p)=(a^*(p),b^*(p))^\top$ be the population CDR coefficients and
let $\hat\theta(p)=(\hat a(p),\hat b(p))^\top$ be the coefficients learned from
$N_t$ training circuits, each measured with $B_{\mathrm{train}}$ shots.  In the
local linearized regime of the OLS map, assume the coefficient error
$e=\hat\theta-\theta^*$ is independent of the target-shot noise, has zero mean,
and satisfies
\begin{equation}
    \E[\norm{e}_{G}^2]\le
    \frac{C_{\mathrm{tr}}}{N_tB_{\mathrm{train}}},
    \qquad
    G\succeq \lambda_0 I,\quad \lambda_0>0,
    \label{eq:finite_training_coeff_bound}
\end{equation}
where $\norm{e}_G^2=e^\top Ge$ and $G$ is the population Gram matrix of the
affine training features $(x_j(p),1)$.
\end{assumption}

\begin{proposition}[Conditioning amplification of finite-training noise]
\label{prop:conditioning_amplification}
Let $z_T=(\mu_p,1)^\top$ be the affine target feature vector and let
$e=\hat\theta-\theta^*$ be the finite-training coefficient error.  If, in the
same local regime as Assumption~\ref{ass:finite_training_cdr}, the coefficient
covariance has the lower proxy
\begin{equation}
    \E[ee^\top]\succeq
    \frac{c_{\mathrm{tr}}}{N_tB_{\mathrm{train}}}G^{-1},
    \qquad c_{\mathrm{tr}}>0,
    \label{eq:finite_training_cov_lower}
\end{equation}
then the coefficient-noise contribution to the target MSE obeys
\begin{equation}
    \E[(z_T^\top e)^2]
    \ge
    \frac{c_{\mathrm{tr}}}{N_tB_{\mathrm{train}}}
    z_T^\top G^{-1}z_T.
    \label{eq:conditioning_lower_bound}
\end{equation}
In particular, if the bound $G\succeq\lambda_0 I$ is attained, so that
$v_{\min}$ is a unit eigenvector of $G$ with smallest eigenvalue $\lambda_0$,
and $\rho=z_T^\top v_{\min}$, then
\begin{equation}
    \E[(z_T^\top e)^2]
    \ge
    \frac{c_{\mathrm{tr}}\rho^2}
    {\lambda_0N_tB_{\mathrm{train}}}.
    \label{eq:conditioning_lambda_bound}
\end{equation}
\end{proposition}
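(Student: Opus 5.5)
The plan is to take expectation of a quadratic form and then apply the Loewner-order hypothesis \eqref{eq:finite_training_cov_lower} directly. Because $z_T$ is deterministic once $p$ and the target are fixed, we can write
\begin{equation}
    \E[(z_T^\top e)^2]=\E[z_T^\top e e^\top z_T]=z_T^\top\E[ee^\top]z_T .
\end{equation}
By definition of the Loewner order, $M\succeq N$ means $v^\top Mv\ge v^\top Nv$ for every vector $v$. Applying this with $v=z_T$, $M=\E[ee^\top]$ and $N=\frac{c_{\mathrm{tr}}}{N_tB_{\mathrm{train}}}G^{-1}$ gives \eqref{eq:conditioning_lower_bound} immediately. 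The invertibility of $G$ needed to write $G^{-1}$ comes from $G\succeq\lambda_0I$ with $\lambda_0>0$ in Assumption~\ref{ass:finite_training_cdr}.

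For \eqref{eq:conditioning_lambda_bound}, we spectrally decompose the symmetric positive definite Gram matrix as $G=\sum_k\lambda_k v_kv_k^\top$ with an orthonormal eigenbasis. This yields
\begin{equation}
    z_T^\top G^{-1}z_T=\sum_k\lambda_k^{-1}(z_T^\top v_k)^2 .
\end{equation}
Every term is nonnegative, so we may drop all of them except the one for $v_{\min}$, whose eigenvalue is $\lambda_0$. This gives $z_T^\top G^{-1}z_T\ge\rho^2/\lambda_0$, and substituting into \eqref{eq:conditioning_lower_bound} completes the argument.

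There is no real obstacle here; the only step needing care is the interpretation of ``the bound is attained.'' We need $\lambda_0$ to be an actual eigenvalue of $G$ with unit eigenvector $v_{\min}$, so that the term $\lambda_0^{-1}\rho^2$ appears in the spectral sum. The other eigenvalues are all at least $\lambda_0$, but this is not used, since their terms are only discarded as nonnegative.
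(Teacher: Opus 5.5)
Your proof is correct and follows the paper's argument essentially step for step. You write $\E[(z_T^\top e)^2]$ as the quadratic form $z_T^\top\E[ee^\top]z_T$, apply the Loewner hypothesis at $z_T$, and then expand $z_T^\top G^{-1}z_T$ in an orthonormal eigenbasis of $G$ (taken to contain $v_{\min}$), keeping only the nonnegative term along $v_{\min}$.
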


\begin{proof}
The coefficient-noise contribution is
\begin{equation}
    \E[(z_T^\top e)^2]=z_T^\top \E[ee^\top] z_T.
\end{equation}
Substituting \eqref{eq:finite_training_cov_lower} gives
\eqref{eq:conditioning_lower_bound}.  Expanding
$z_T^\top G^{-1}z_T$ in the eigenbasis of $G$ and keeping only the component
along $v_{\min}$ gives \eqref{eq:conditioning_lambda_bound}.
\end{proof}

\begin{remark}
Proposition~\ref{prop:conditioning_amplification} shows that the
$1/\lambda_0$ dependence in the finite-training analysis reflects an operational
effect rather than a proof artifact.  If the training feature Gram matrix is
nearly singular in a direction
seen by the target feature vector, finite-training noise is amplified at the
target even when the population CDR map is well defined.
\end{remark}

\begin{proposition}[Finite-training CDR correction]
\label{prop:finite_training_cdr_bound}
Under Assumption~\ref{ass:finite_training_cdr}, let the target noisy sample mean
be $\hat\mu_p=\mu_p+\xi_T$, with
$\E[\xi_T]=0$ and $\Var(\xi_T)=\sigma_N^2(p)/B$.  The finite-training CDR
estimator
\begin{equation}
    \hat X_{\CDR}=\hat a(p)\hat\mu_p+\hat b(p)
\end{equation}
satisfies the fully explicit upper bound
\begin{equation}
    \E[(\hat X_{\CDR}-\mu_0)^2]
    \le
    b_{\mathrm{mis}}^2(p)
    +a^{*2}(p)\frac{\sigma_N^2(p)}{B}
    +\frac{2C_{\mathrm{tr}}}{\lambda_0N_tB_{\mathrm{train}}}
    +\frac{C_{\mathrm{tr}}\,\sigma_N^2(p)}
    {\lambda_0N_tB_{\mathrm{train}}B}.
    \label{eq:finite_training_cdr_bound_formula}
\end{equation}
\end{proposition}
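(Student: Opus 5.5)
We decompose the estimation error into a bias piece, a target-shot piece and a coefficient-noise piece:
\begin{equation}
    \hat X_{\CDR}-\mu_0
    =\underbrace{a^*\mu_p+b^*-\mu_0}_{b_{\mathrm{mis}}(p)}
    +a^*\xi_T
    +z_T^\top e
    +e_a\xi_T,
\end{equation}
where $e=(e_a,e_b)^\top=\hat\theta-\theta^*$ and $z_T=(\mu_p,1)^\top$. This identity follows from
\begin{equation}
    \hat a\hat\mu_p+\hat b=(a^*+e_a)(\mu_p+\xi_T)+b^*+e_b .
\end{equation}
We square the decomposition and take expectations, using the independence of $e$ and $\xi_T$ together with $\E[e]=0$ and $\E[\xi_T]=0$ from Assumption~\ref{ass:finite_training_cdr}. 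The cross terms $b_{\mathrm{mis}}\E[\xi_T]$, $b_{\mathrm{mis}}\E[z_T^\top e]$, $a^*\E[\xi_T]\E[z_T^\top e]$, $b_{\mathrm{mis}}\E[e_a]\E[\xi_T]$ and $\E[(z_T^\top e)e_a]\E[\xi_T]$ all vanish. The term $a^*\E[e_a]\E[\xi_T^2]$ also vanishes, because $\E[e_a]=0$. The mean-square error is therefore exactly
\begin{equation}
    b_{\mathrm{mis}}^2+a^{*2}\frac{\sigma_N^2}{B}+\E[(z_T^\top e)^2]+\E[e_a^2]\frac{\sigma_N^2}{B}.
\end{equation}

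It remains to bound the two coefficient-noise terms. From $G\succeq\lambda_0 I$ we get $\norm{e}^2\le\norm{e}_G^2/\lambda_0$. Hence
\begin{equation}
    \E[e_a^2]\le\E[\norm{e}^2]\le\frac{C_{\mathrm{tr}}}{\lambda_0N_tB_{\mathrm{train}}},
\end{equation}
which gives the last term of \eqref{eq:finite_training_cdr_bound_formula}. For the other term, Cauchy--Schwarz gives $(z_T^\top e)^2\le\norm{z_T}^2\norm{e}^2$. Since $|\mu_p|\le1$ for a Pauli observable, $\norm{z_T}^2=\mu_p^2+1\le2$. It follows that
\begin{equation}
    \E[(z_T^\top e)^2]\le\frac{2C_{\mathrm{tr}}}{\lambda_0N_tB_{\mathrm{train}}}.
\end{equation}
Summing the four terms yields \eqref{eq:finite_training_cdr_bound_formula}.

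The main point needing care is bookkeeping rather than analysis. The independence and zero-mean hypotheses of Assumption~\ref{ass:finite_training_cdr} must genuinely kill every cross term, including the mixed third-order ones. This holds because every such term carries a factor $\E[\xi_T]=0$ or $\E[e]=0$ after factorization. The bound $\norm{z_T}^2\le2$ relies on the Pauli outcome range; for a general observable the constant $2$ would be replaced by $1+\mu_p^2$.
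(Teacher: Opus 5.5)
Your proposal is correct and follows essentially the same route as the paper's proof. You use the same decomposition $b_{\mathrm{mis}}(p)+z_T^\top e+a^*(p)\xi_T+e_a\xi_T$, remove the cross terms via independence and zero mean, and apply the bounds $\norm{z_T}^2\le 2$ and $\norm{e}^2\le\norm{e}_G^2/\lambda_0$; your explicit check that $2a^*\E[e_a]\E[\xi_T^2]$ vanishes, making the four-term MSE an exact identity before bounding, is a slightly more careful version of the paper's remark that the cross terms drop out.
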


\begin{proof}
Write $z_T=(\mu_p,1)^\top$.  Since $|\mu_p|\le1$ for a Pauli expectation,
$\norm{z_T}^2\le2$.  Expanding the estimator gives
\begin{equation}
    \hat X_{\CDR}-\mu_0
    =
    b_{\mathrm{mis}}(p)+z_T^\top e+a^*(p)\xi_T+e_a\xi_T,
\end{equation}
where $e_a$ is the slope component of $e$.  The zero-mean and independence
assumptions remove the first-order cross terms.  Moreover,
\begin{equation}
    \E[(z_T^\top e)^2]
    \le
    \norm{z_T}^2\,\E[\norm{e}^2]
    \le
    \frac{2}{\lambda_0}\E[\norm{e}_G^2]
    \le
    \frac{2C_{\mathrm{tr}}}{\lambda_0N_tB_{\mathrm{train}}}.
\end{equation}
The target-shot term contributes
$a^{*2}\sigma_N^2/B$.  For the product term, independence gives
$\E[e_a^2\xi_T^2]=\E[e_a^2]\,\sigma_N^2/B$, and
$\E[e_a^2]\le\E[\norm{e}^2]\le
C_{\mathrm{tr}}/(\lambda_0N_tB_{\mathrm{train}})$ under
\eqref{eq:finite_training_coeff_bound}, which gives the last term.
\end{proof}

\begin{corollary}[Optimal target/training shot allocation for the CDR bound]
\label{cor:optimal_shot_allocation}
Ignore the higher-order product term in
\eqref{eq:finite_training_cdr_bound_formula} and define
\begin{equation}
    V_T(p)=a^{*2}(p)\sigma_N^2(p),
    \qquad
    K_{\mathrm{tr}}=\frac{2C_{\mathrm{tr}}}{\lambda_0}.
\end{equation}
For a fixed total CDR shot budget
\begin{equation}
    B_{\mathrm{tot}}=B+N_tB_{\mathrm{train}},
\end{equation}
the finite-shot part of the upper bound,
\begin{equation}
    \frac{V_T(p)}{B}
    +\frac{K_{\mathrm{tr}}}{N_tB_{\mathrm{train}}},
\end{equation}
is minimized by
\begin{align}
    B^*&=
    \frac{\sqrt{V_T(p)}}{\sqrt{V_T(p)}+\sqrt{K_{\mathrm{tr}}}}\,
    B_{\mathrm{tot}},\\
    B_{\mathrm{train}}^*&=
    \frac{\sqrt{K_{\mathrm{tr}}}}
    {N_t(\sqrt{V_T(p)}+\sqrt{K_{\mathrm{tr}}})}\,
    B_{\mathrm{tot}}.
    \label{eq:optimal_training_shots}
\end{align}
The minimized finite-shot correction is
\begin{equation}
    \frac{(\sqrt{V_T(p)}+\sqrt{K_{\mathrm{tr}}})^2}{B_{\mathrm{tot}}}.
    \label{eq:optimal_allocation_value}
\end{equation}
\end{corollary}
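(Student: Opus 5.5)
This is a two-variable constrained minimization. Write $s=B$ and $t=N_tB_{\mathrm{train}}$, so the constraint is $s+t=B_{\mathrm{tot}}$ with $s,t>0$. We minimize
\begin{equation}
    F(s,t)=\frac{V_T(p)}{s}+\frac{K_{\mathrm{tr}}}{t}.
\end{equation}
We assume $V_T(p)>0$ and $K_{\mathrm{tr}}>0$; the degenerate cases, where the optimum is to put all shots on one side, are handled separately.

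Since only the product $N_tB_{\mathrm{train}}$ enters the bound, we optimize over $t$ and recover $B_{\mathrm{train}}^*=t^*/N_t$ at the end.

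\textbf{Step 1: locate the optimum.} Eliminate $t=B_{\mathrm{tot}}-s$ and view $F$ as a function of $s\in(0,B_{\mathrm{tot}})$. It is a sum of two strictly convex terms that blow up at both endpoints, so it has a unique interior minimizer. Setting the derivative to zero gives
\begin{equation}
    \frac{V_T}{s^2}=\frac{K_{\mathrm{tr}}}{t^2},
    \qquad\text{so}\qquad
    \frac{s}{t}=\frac{\sqrt{V_T}}{\sqrt{K_{\mathrm{tr}}}}.
\end{equation}
Combining this ratio with $s+t=B_{\mathrm{tot}}$ gives the displayed $B^*$ and $t^*$, and dividing $t^*$ by $N_t$ gives \eqref{eq:optimal_training_shots}.

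\textbf{Step 2: evaluate the minimum.} Write $S=\sqrt{V_T}+\sqrt{K_{\mathrm{tr}}}$. Substituting the optimizer gives
\begin{equation}
    \frac{V_T\,S}{\sqrt{V_T}\,B_{\mathrm{tot}}}
    +\frac{K_{\mathrm{tr}}\,S}{\sqrt{K_{\mathrm{tr}}}\,B_{\mathrm{tot}}}
    =\frac{S^2}{B_{\mathrm{tot}}},
\end{equation}
which is \eqref{eq:optimal_allocation_value}.

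\textbf{Step 3: global optimality without calculus.} Alternatively, the Cauchy--Schwarz inequality gives
\begin{equation}
    \Bigl(\frac{V_T}{s}+\frac{K_{\mathrm{tr}}}{t}\Bigr)(s+t)
    \ge\bigl(\sqrt{V_T}+\sqrt{K_{\mathrm{tr}}}\bigr)^2.
\end{equation}
Equality holds exactly when $s/\sqrt{V_T}=t/\sqrt{K_{\mathrm{tr}}}$. This single inequality proves both the lower bound and its attainment, and it is the route I would present.

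\textbf{Main obstacle.} The only real subtlety is interpretive. The allocation is a continuous relaxation: integrality of shots is ignored, and the product term is dropped, as the statement explicitly instructs. Beyond stating these caveats, there is no genuine difficulty.
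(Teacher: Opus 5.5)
Your proof is correct. Steps 1--2 reproduce the paper's argument exactly. The paper sets $Y=N_tB_{\mathrm{train}}$ and notes that $V_T/B+K_{\mathrm{tr}}/Y$ is strictly convex on $0<B<B_{\mathrm{tot}}$ in the nondegenerate case. It then solves $V_T/B^2=K_{\mathrm{tr}}/Y^2$ to get $B:Y=\sqrt{V_T}:\sqrt{K_{\mathrm{tr}}}$ and substitutes.

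The Cauchy--Schwarz route you say you would present is a genuinely different, calculus-free certificate. Multiplying by $s+t=B_{\mathrm{tot}}$ gives the lower bound $(\sqrt{V_T}+\sqrt{K_{\mathrm{tr}}})^2/B_{\mathrm{tot}}$ for every feasible allocation at once. The equality condition $s/\sqrt{V_T}=t/\sqrt{K_{\mathrm{tr}}}$, together with $s+t=B_{\mathrm{tot}}$, pins down the unique optimizer. So global optimality, uniqueness, and the minimized value all follow from one inequality, with no separate convexity, boundary-blowup, or substitution step. The paper's first-order-condition route is more mechanical, but it carries over with less ingenuity if extra terms (such as the dropped product term) are kept, since these need not admit a clean Cauchy--Schwarz pairing.

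In the degenerate cases $V_T=0$ or $K_{\mathrm{tr}}=0$, your inequality still holds, but equality is only approached as one of $s,t$ tends to zero. This is consistent with the paper restricting its statement to the nondegenerate case.
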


\begin{proof}
Set $Y=N_tB_{\mathrm{train}}$.  The constraint is $B+Y=B_{\mathrm{tot}}$, and
the bound to minimize is $V_T/B+K_{\mathrm{tr}}/Y$, which is strictly convex on
$0<B<B_{\mathrm{tot}}$ in the nondegenerate case.  The first-order condition is
$V_T/B^2=K_{\mathrm{tr}}/Y^2$, so
$B:Y=\sqrt{V_T}:\sqrt{K_{\mathrm{tr}}}$.  Substitution gives
\eqref{eq:optimal_training_shots} and \eqref{eq:optimal_allocation_value}.
\end{proof}

\begin{remark}
Corollary~\ref{cor:optimal_shot_allocation} optimizes the finite-training upper
bound.  Operationally, it makes the conditioning tradeoff explicit: larger
$K_{\mathrm{tr}}$ shifts shots from the target circuit into the training
circuits, while larger target-shot variance $V_T$ shifts shots back to the
target circuit.
\end{remark}

\begin{remark}
Proposition~\ref{prop:finite_training_cdr_bound} gives a local finite-sample
upper bound under the standard linearized, well-conditioned, sub-Gaussian
coefficient-error regime.  Its purpose is to show the scaling of the additional
finite-training correction: the population CDR dominance interval survives when
$N_tB_{\mathrm{train}}$ is large enough that this term is below the
calibration-bias and target-shot terms.
\end{remark}

\begin{remark}[Limits of the finite-training assumption]
\label{rem:finite_training_limits}
Assumption~\ref{ass:finite_training_cdr} describes the well-conditioned local
finite-training regime.  Departures from this regime arise with very small
$N_t$, very low $B_{\mathrm{train}}$, a nearly singular training Gram matrix,
target feature vectors outside the support of the training distribution, or
predictor noise large enough that the local OLS linearization becomes an
errors-in-variables problem rather than a small coefficient perturbation.  In
those settings the relevant tests are the conditioning of $G$ and the
empirical stability of $(\hat a,\hat b)$ under repeated finite-training
resampling.
\end{remark}
\section{Additional numerical validation}
\label{app:canonical_numerical_details}

\subsection{Mismatch scaling and boundary uncertainty}
\label{subsec:mismatch_boundary_results}

The mismatch-scaling experiment supports the leading-order prediction that
CDR's useful budget interval is controlled by calibration mismatch.  Each
calibration row fits $\log B_{\PEC=\CDR}$ against $\log(1/\delta_1^2)$ at fixed
$p$, so the leading-order law \eqref{eq:cdr_window_upper} predicts slope one.
The median observed slope is $0.8959$ with IQR $[0.8864,0.9160]$ across the
specified rows.  The finite-grid estimate remains stable across rows and tracks
the predicted exponent-one mismatch trend.  The bootstrap analysis shows that
uncertainty is localized near region boundaries: all qualitative regions remain
stable, the grid-cell ambiguity fraction is $0.1091$, and the boundary
ambiguous fraction is $0.1935$.

\begin{figure}
\centering
\includegraphics[width=0.78\textwidth]{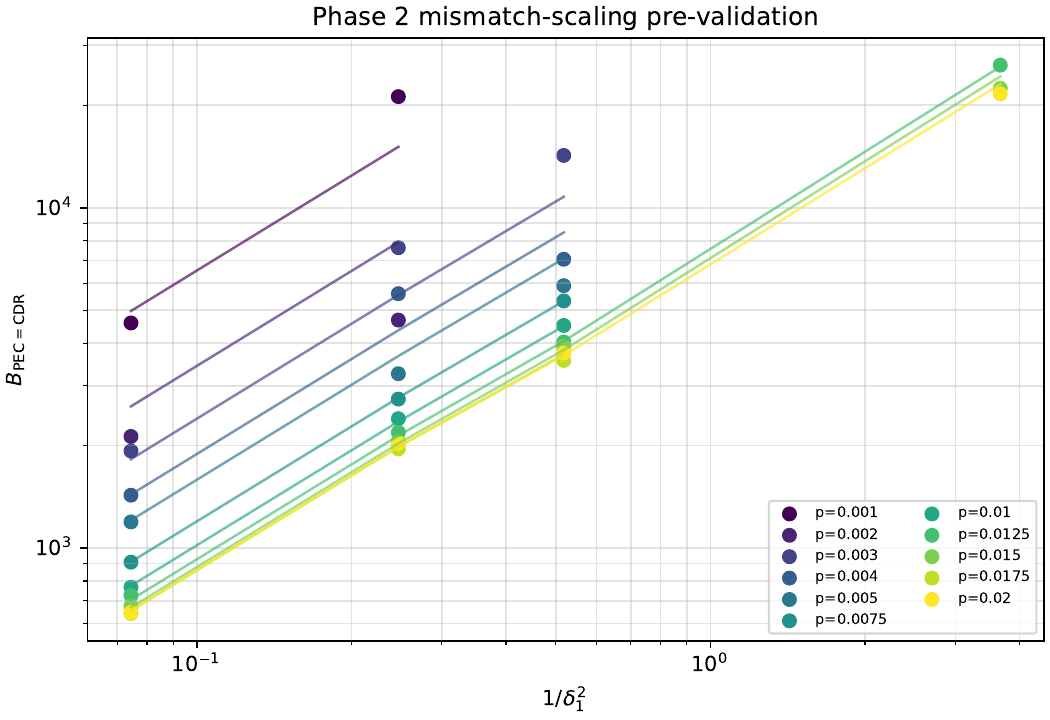}
\caption{Mismatch-scaling validation.  The observed crossover scale follows the
leading-order trend with median slope $0.8959$ over the calibration rows.}
\label{fig:phase2_mismatch_scaling}
\end{figure}

\begin{figure}
\centering
\includegraphics[width=0.70\textwidth]{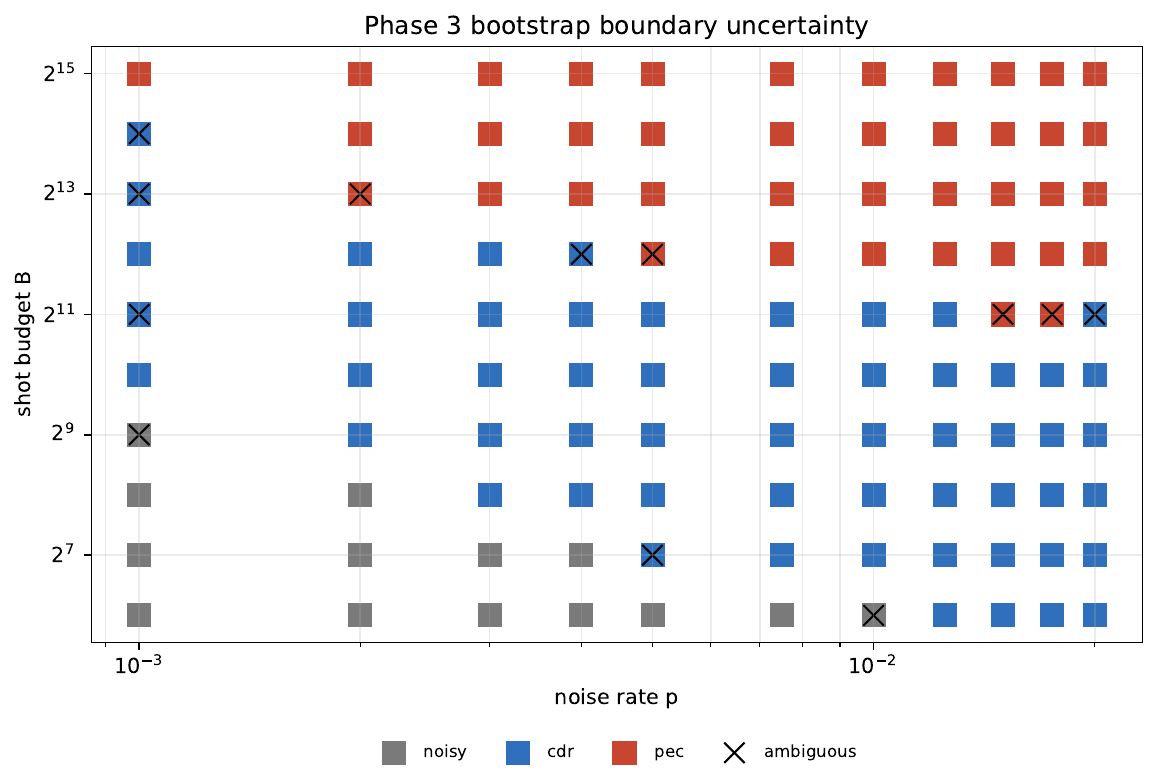}
\caption{Bootstrap boundary test.  Ambiguity is concentrated near the
predicted phase boundaries rather than spread across the non-boundary regions.}
\label{fig:phase3_bootstrap_boundary}
\end{figure}

\subsection{Hamiltonian and instance robustness}
\label{subsec:hamiltonian_instance_results}

The baseline scalar edge observable is complemented by a Hamiltonian-level
test, which replaces $Z_0Z_1$ with the full commuting weighted MaxCut
Hamiltonian and includes the covariance terms from
Section~\ref{sec:hamiltonian_extension}.  The Hamiltonian CDR map uses the
scalar-trained Hamiltonian correction of
Section~\ref{sec:hamiltonian_extension}.  All non-boundary classifications
match, with boundary-local disagreement $0.0370$.  The largest observed-fit
relative threshold error is
$0.4494$, again at a boundary or low-noise point.

\begin{figure}
\centering
\includegraphics[width=0.78\textwidth]{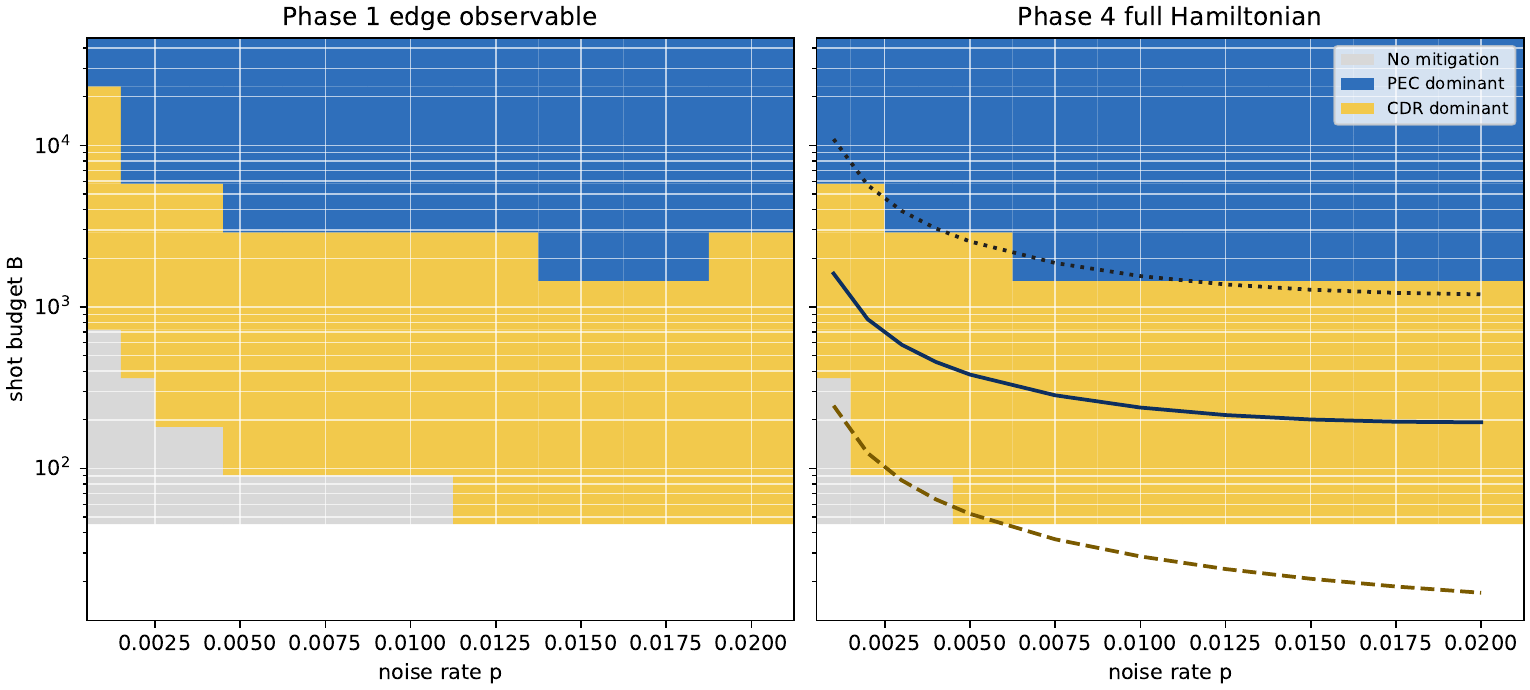}
\caption{Comparison between the baseline edge observable and the full
commuting MaxCut Hamiltonian validation.  The three-region structure survives
the covariance-level Hamiltonian upgrade.}
\label{fig:phase4_edge_vs_hamiltonian}
\end{figure}

The instance-robustness sweep then tests selected QAOA instances across graph
families and graph sizes up to $n=10$.  The full sweep covers $21$
instances and $1008$ Monte Carlo cells.  All non-boundary classifications match; the relative-error
fraction below $0.30$ is $0.9302$, and the fraction at most $0.50$ is $0.9767$.
The median relative crossover error is $0.0303$.  The
$n=10$ tier is included; it contributes $18$ rows, $12$ finite crossover
errors, and a $<0.30$ fraction of $0.9167$.  The finite-crossing set includes one
boundary-local point at relative error $0.4370$.

\begin{figure}
\centering
\includegraphics[width=0.78\textwidth]{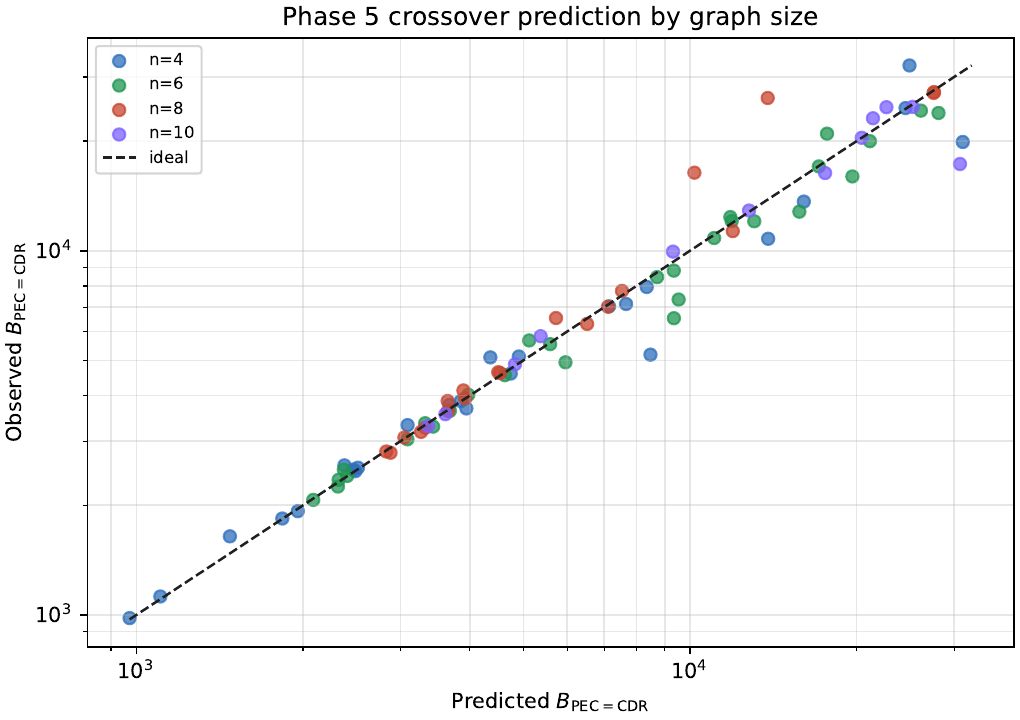}
\caption{Crossover prediction by graph size for the selected
instance-robustness sweep.}
\label{fig:phase5_crossover_by_graph_size}
\end{figure}

\begin{figure}
\centering
\includegraphics[width=0.78\textwidth]{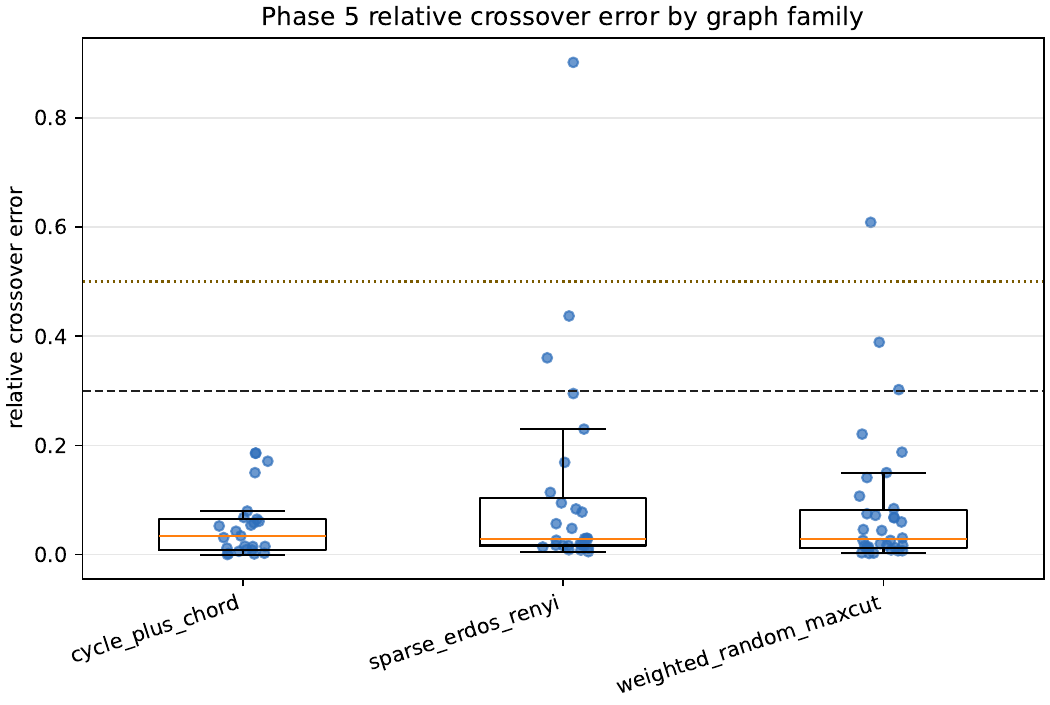}
\caption{Relative crossover error by graph family.  Most finite crossovers are
within the specified $<0.30$ relative-error threshold, and nearly all are within
$0.50$.}
\label{fig:phase5_relative_error_by_graph_family}
\end{figure}

\subsection{Pauli-channel robustness}
\label{subsec:pauli_channel_results}

The Pauli-channel robustness test verifies that the finite-shot prediction
extends beyond the symmetric depolarizing closed form for $\Gamma_{\mathcal Q}$.
The tested variants use positive channel-specific PEC inverses: N2a has
$p_x:p_y:p_z=2:1:1$, N2b uses $1:1:3$, and N3 adds sparse correlated $XX$ and
$ZZ$ two-qubit Pauli terms.

\begin{table}
\centering
\small
\caption{Pauli-channel robustness summary.  N2b and N3 have low-signal
points at $p=0.001$ and one boundary-local finite point each at
$p=0.002$; the overall conclusion remains consistent.}
\label{tab:phase6_pauli_summary}
\begin{ruledtabular}
\begin{tabular}{l c c c c}
\toprule
Model & Finite errors & Fraction $<0.30$ & Mean rel. error & Max rel. error \\
\colrule
N2a & 6 & 1.0 & 0.0553 & 0.1507 \\
N2b & 5 & 0.8 & 0.1103 & 0.4830 \\
N3  & 5 & 0.8 & 0.0840 & 0.3005 \\
\botrule
\end{tabular}
\end{ruledtabular}
\end{table}

\begin{figure}
\centering
\includegraphics[width=0.78\textwidth]{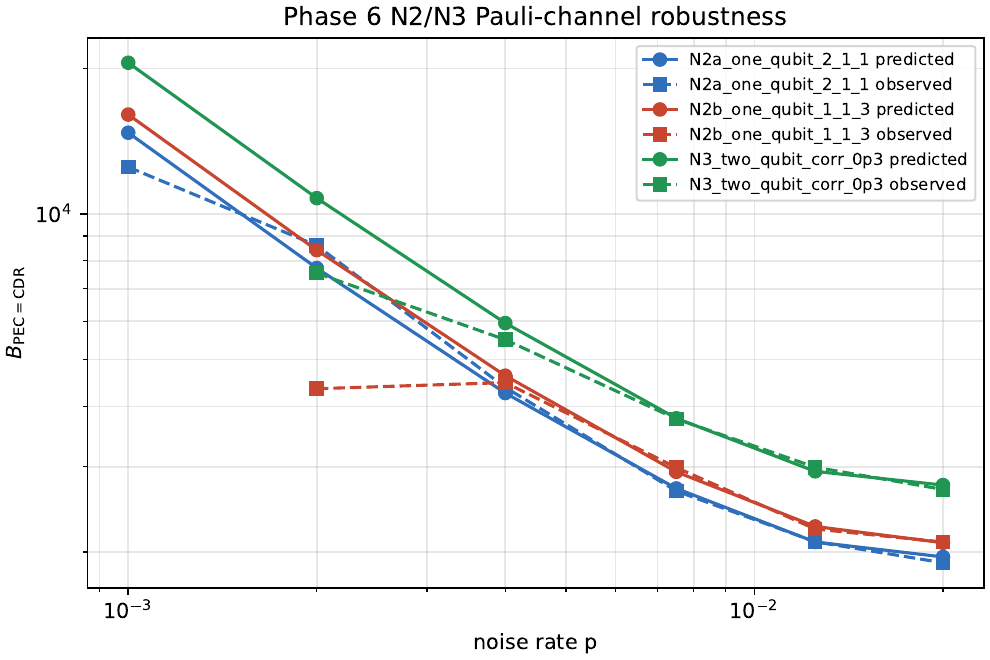}
\caption{Pauli-channel robustness of the PEC/CDR crossover prediction using
channel-specific PEC inverse one-norms.}
\label{fig:phase6_pauli_channel_robustness}
\end{figure}

\section{Secondary robustness tests}
\label{sec:phase7_supplement}

These secondary tests examine whether the baseline operating-window structure
persists under angle changes, a depth-two QAOA setting, near-zero baseline
response, and finite CDR training budgets.  They are reported after the main
baseline comparison because they probe robustness of the numerical picture
rather than define the baseline result.

\begin{table}
\centering
\footnotesize
\caption{Secondary robustness tests.  The selected perturbations preserve the
main operating-window interpretation while identifying where the predicted
crossover moves outside the sampled budget range.}
\label{tab:phase7_summary}
\begin{ruledtabular}
\begin{tabular}{@{}p{0.15\textwidth} p{0.36\textwidth} p{0.39\textwidth}@{}}
\toprule
Analysis & Numerical result & Physical interpretation \\
\colrule
Angle sensitivity &
All four selected angle pairs preserve the classification, and all non-boundary
classifications match. &
The three-region structure persists across the selected angle changes. \\
Depth-two QAOA &
The continuous-formula crossover
$B_{\PEC=\CDR}(p=0.005)\approx172$ lies below the selected grid minimum
$B_{\min}=256$, placing the crossover below the Monte Carlo grid. &
The depth-two setting therefore places the predicted crossover below the
sampled budget range for this grid. \\
Near-zero $\mu_0$ &
The qualitative structure is stable, with non-boundary agreement $0.992248$;
the CDR-dominant window moves outside the selected budget range in all
selected cases. &
Exact threshold formulas remain predictive, with the CDR-dominant window moving
outside the selected budget range near $\mu_0=0$, as explained by the
$1/\mu_0^2$ window displacement in Section~\ref{sec:phase_diagram}. \\
Finite training budget &
$B_{\mathrm{train,crit}}=4096$; classification agreement $0.975$ at the first
stable training budget; threshold margin $3.038$ standard errors. &
CDR retains a useful budget interval once the CDR training circuits receive a
sufficient finite shot budget. \\
\botrule
\end{tabular}
\end{ruledtabular}
\end{table}

\begin{figure}
\centering
\includegraphics[width=0.78\textwidth]{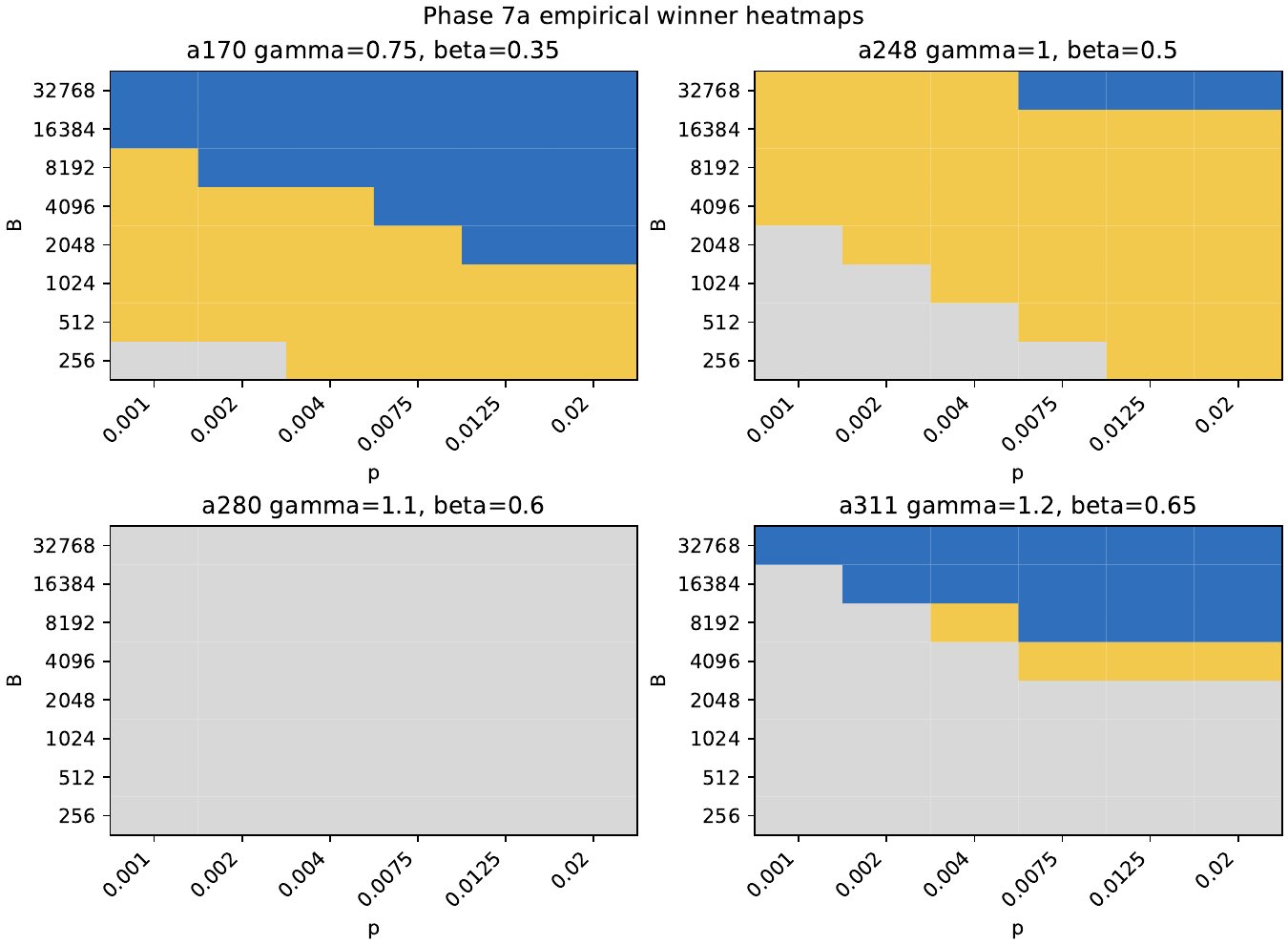}
\caption{Angle-sensitivity test across four selected QAOA angle pairs.}
\label{fig:phase7_angle_sensitivity}
\end{figure}

\begin{figure}
\centering
\includegraphics[width=0.78\textwidth]{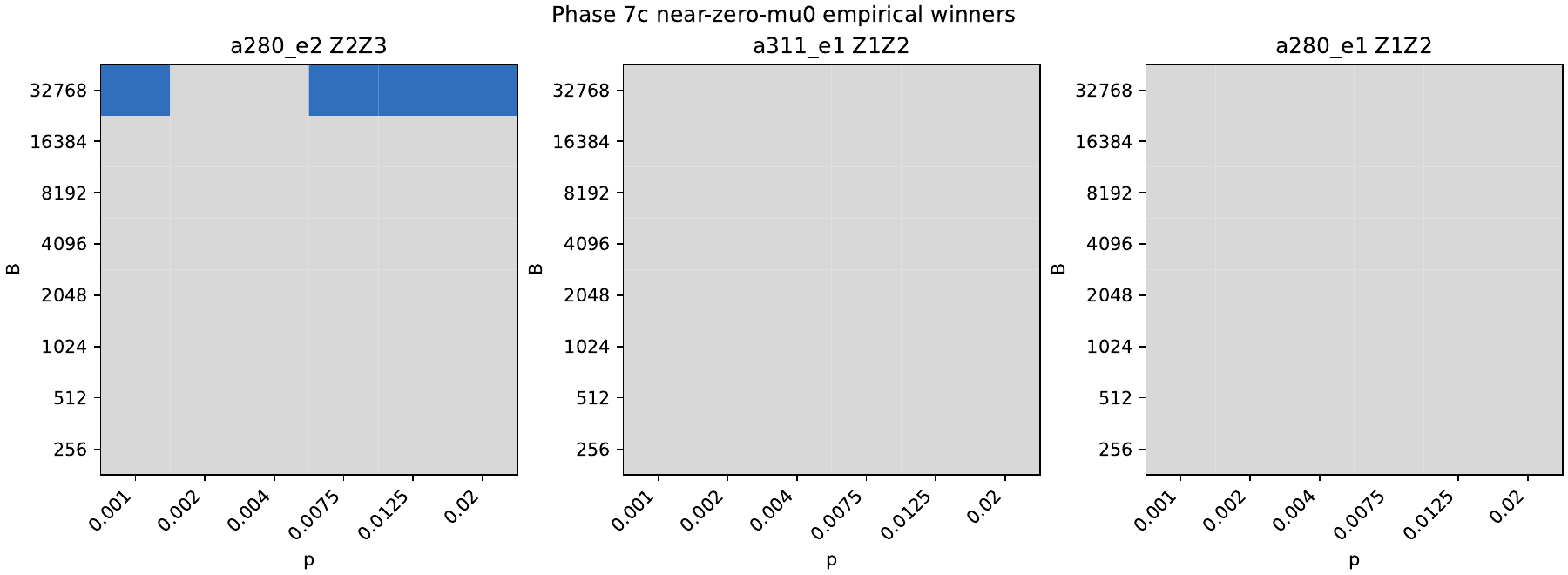}
\caption{Near-zero-$\mu_0$ stress test.  The exact formulas remain
predictive while the CDR-dominant region moves outside the selected budget range
in the near-zero cases.}
\label{fig:phase7_mu0_near_zero}
\end{figure}

\begin{figure}
\centering
\includegraphics[width=0.78\textwidth]{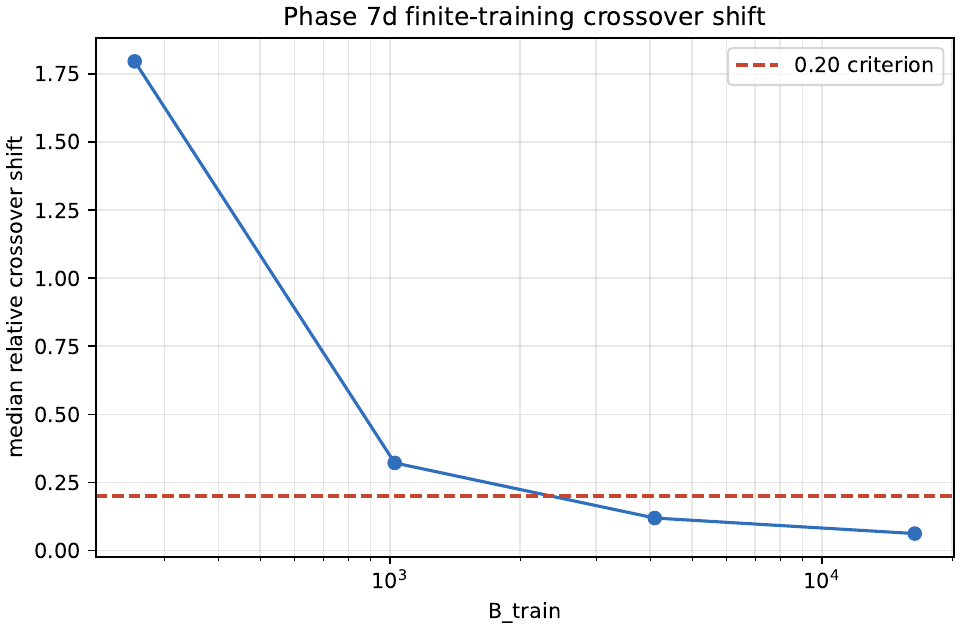}
\caption{Finite-training-shot test.  In the tested design,
$B_{\mathrm{train}}=4096$ is the first selected training budget with median
relative crossover shift below $0.20$.}
\label{fig:phase7_training_budget}
\end{figure}

\section{Calibration and finite-training tests}
\label{app:phase8_plus_experimental_layers}

The following tests address validation questions beyond the main baseline
comparisons.  The calibration hierarchy tests second-order CDR behavior,
response-projection floors, and finite-training shot allocation.  These analyses
are supplementary to the baseline operating-window result: they test how the
same finite-shot operating picture changes under calibrated first-order cancellation,
affine response projection, and finite allocation of shots between training and
target circuits.  Table~\ref{tab:phase8_to_10_summary} summarizes the
calibration and allocation outcomes.

\begin{table}
\centering
\scriptsize
\caption{Calibration and finite-training tests beyond the main
baseline comparisons.  The table summarizes observed outcomes for these
supplementary tests.}
\label{tab:phase8_to_10_summary}
\begin{ruledtabular}
\begin{tabular}{@{}p{0.12\textwidth} p{0.31\textwidth} p{0.47\textwidth}@{}}
\toprule
Analysis & Numerical result & Physical interpretation \\
\colrule
Second-order calibrated CDR &
The two root designs have maximum relative crossover errors
$0.1078$ and $0.0835$; their fitted slopes are $-2.855$ and $-2.938$, while
control slopes remain near $-1$. &
Supports the predicted second-order $p^{-3}$ crossover behavior when the
first-order calibration term is tuned near zero.  The test is an
oracle-assisted deterministic white-box density-matrix validation. \\
Response-projection floor &
The frozen instance ensemble has $21$ instances and $E_{\mathrm{aff}}=0.6521$; the
projection-oracle and T1 coefficient relative errors are $0.1125$ and $0.0762$.
The finite-grid lower-bound interval remains descriptive on this grid. &
Supports the frozen affine projection-floor interpretation; the finite-grid
interval is reported as a descriptive test for this ensemble. \\
Finite-training shot allocation &
The qualitative allocation trend is robust under the frozen design, while the
stricter optimum criterion is sensitive to finite-training resampling and the
regret test remains descriptive.  The total budget is $B_{\mathrm{tot}}=262144$ with
$N_t=48$ training circuits. &
Supports the qualitative shift toward larger training fractions for the
ill-conditioned design, while the stricter frozen optimum criterion is sensitive
to finite-training resampling. \\
\botrule
\end{tabular}
\end{ruledtabular}
\end{table}

\begin{figure}
\centering
\includegraphics[width=0.78\textwidth]{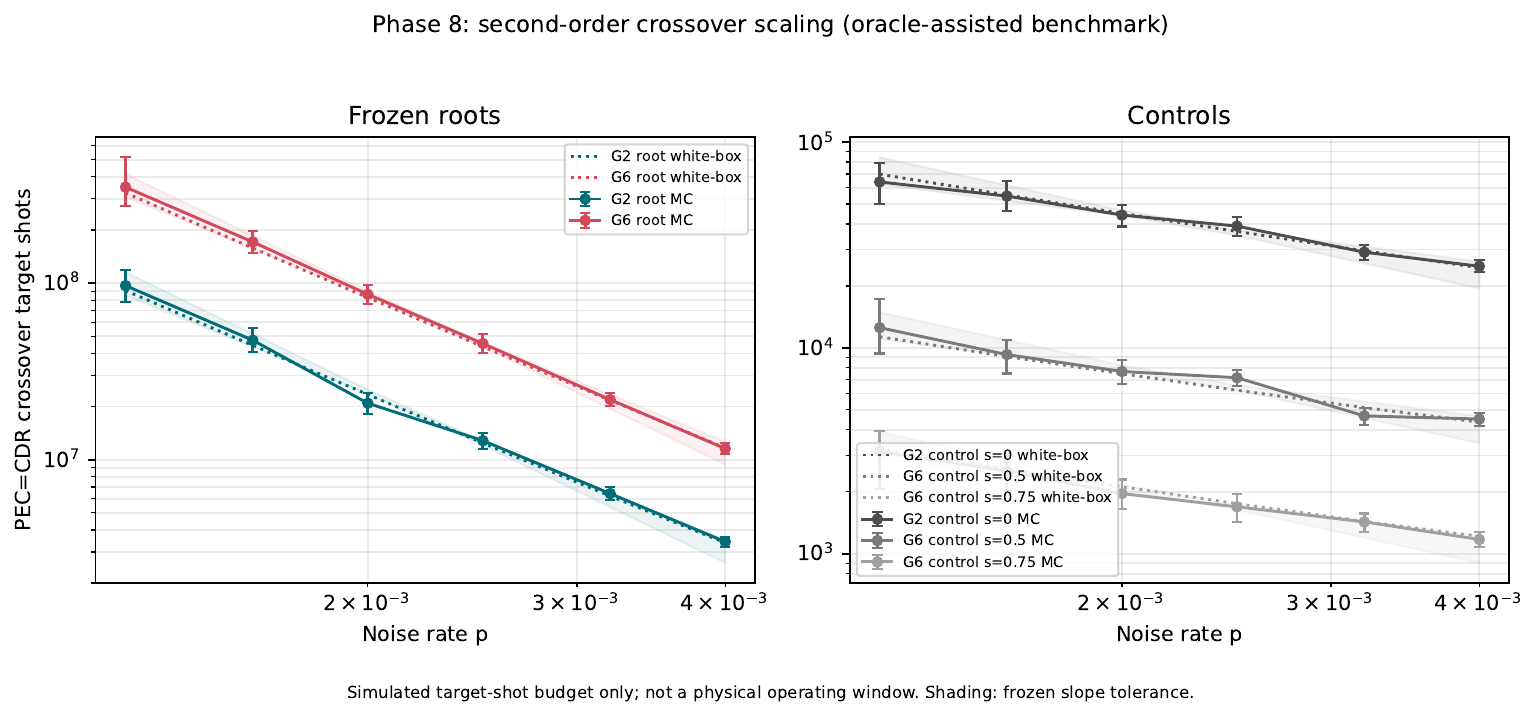}
\caption{Second-order crossover test.  Root designs probe the
first-order-calibrated regime, while controls retain the first-order scaling.}
\label{fig:phase8_second_order_scaling}
\end{figure}

\begin{figure}
\centering
\includegraphics[width=0.78\textwidth]{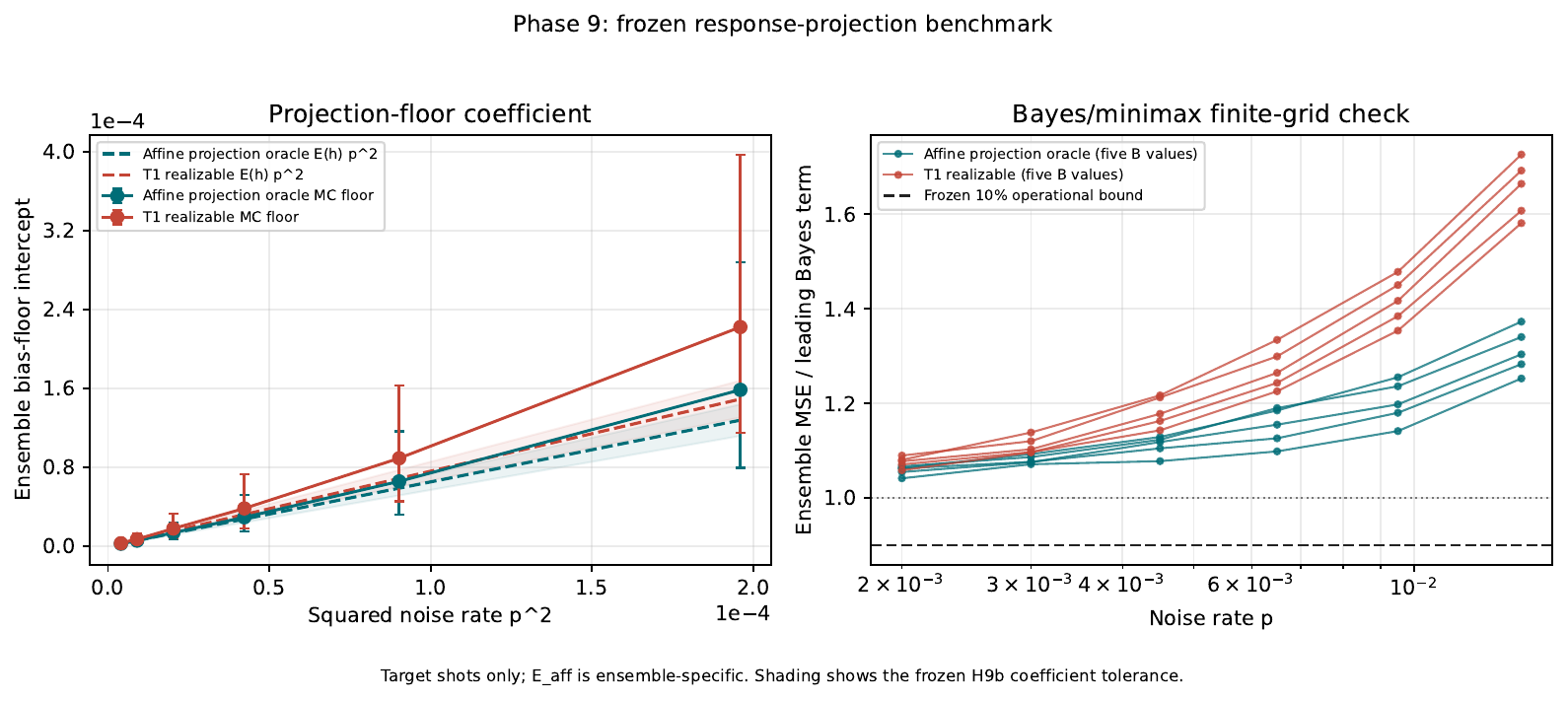}
\caption{Response-projection test over the frozen instance ensemble.}
\label{fig:phase9_projection_floor}
\end{figure}

\begin{figure}
\centering
\includegraphics[width=0.78\textwidth]{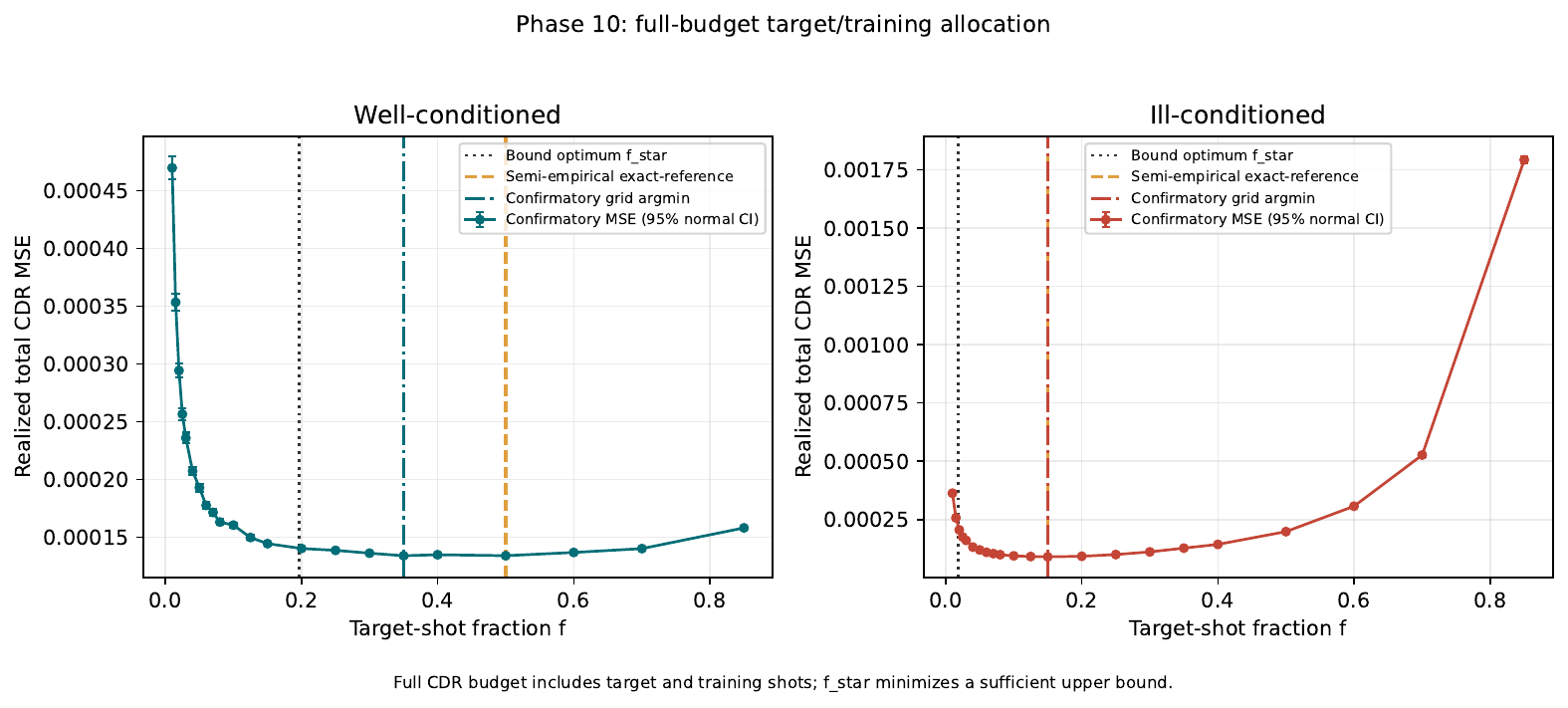}
\caption{Finite-training shot-allocation test under the full CDR
budget convention.}
\label{fig:phase10_allocation}
\end{figure}

\section{Device-model and archived hardware-count analyses}
\label{app:device_model_hardware_counts}

The device-facing analyses have two distinct roles.  First, offline
Qiskit Aer/FakeBackend simulations use device-derived noise and readout models
to test the finite-shot interpretation under controlled device-model
assumptions.  Second, archived real-device counts are reanalyzed from fixed
count data, with checksums used only to anchor reproducibility.  Table
\ref{tab:phase11_12_summary} summarizes these device-model and archived-count
outcomes.

\begin{table}
\centering
\scriptsize
\caption{Device-model and archived hardware-count analyses.  The offline
Aer/FakeBackend simulations test controlled device-model behavior, while the
archived-count analysis reports the hardware-facing evidence available from
fixed count data and checksums.}
\label{tab:phase11_12_summary}
\begin{ruledtabular}
\begin{tabular}{@{}p{0.12\textwidth} p{0.31\textwidth} p{0.47\textwidth}@{}}
\toprule
Analysis & Numerical result & Physical interpretation \\
\colrule
Device simulator map &
The offline Aer/FakeManilaV2 confirmatory simulator map has stable regions
consisting of unmitigated and CDR winners in this
configuration.  Accordingly, the specified positive PEC=CDR crossing lies
outside the observed stable map, and the surviving unmitigated=CDR boundary
tracks with maximum relative error $0.00732$. &
Shows that a device-derived sparse Pauli--Lindblad proxy can preserve the
CDR/no-mitigation operating boundary while removing the PEC-dominant region in
that configuration.  This is simulator evidence at the device-model layer. \\
Readout-bias analysis &
The maximum CDR residual change under common readout is $6.44\times10^{-5}$, and the CDR
slope-squared inflation prediction has maximum relative error $0.00662$. &
Supports the expected readout pattern: CDR is robust to shared physical readout
for the tested same-graph family, while PEC carries a nonzero readout component. \\
Model-violation analysis &
The PEC smaller-leading-bias ordering condition is not met in the tested setting.  Across tested
folds, the active crossing is the unmitigated=CDR boundary, which tracks with
maximum relative error $0.00330$; sensitivity tests cover readout maps,
FakeLimaV2, two-qubit-only folding, asymmetric readout, and resource
normalization. &
Shows a model-violation regime in which the CDR/no-mitigation boundary remains
the relevant operating boundary; the result is a robustness boundary case for
the PEC model-bias assumptions. \\
Archived hardware-count analysis &
Epochs 3, 4, and 6 form the analysis cluster; epoch 5 is retained only as an
age-of-calibration control.  The independent recomputation matches the archived
analysis with maximum absolute delta $0$, and the control difference
in mitigated
$|b_{\mathrm{PEC,res}}|$ is $0.1082$. &
Provides a reproducible hardware-count analysis with a cluster bootstrap over
archived epochs: the fixed-count analysis identifies CDR as the practical
finite-budget winner rather than a stable positive-boundary PEC improvement
over CDR.  The raw counts are archived at
\doi{10.5281/zenodo.20765589}; public reproduction uses those counts and their
checksums, and the result is reported as archival evidence. \\
\botrule
\end{tabular}
\end{ruledtabular}
\end{table}

\begin{figure}
\centering
\includegraphics[width=0.84\textwidth]{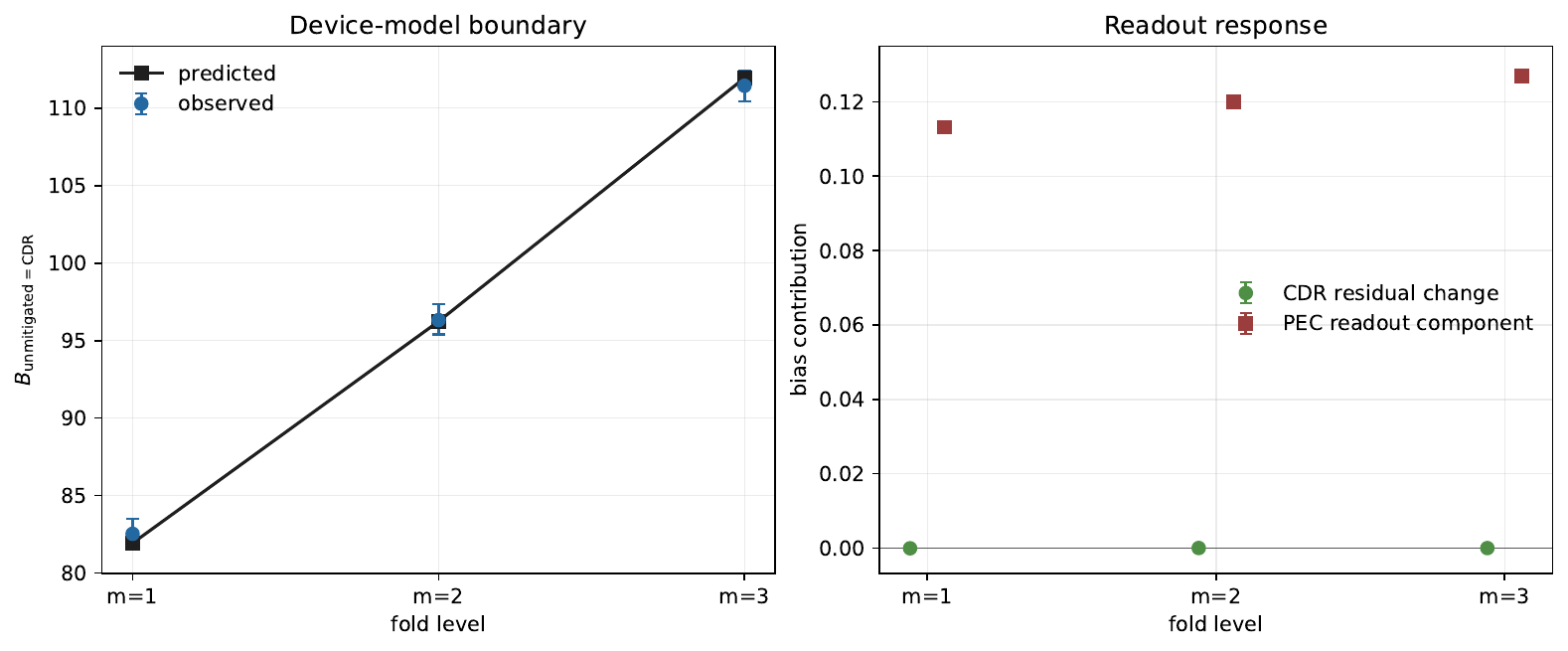}
\caption{Device-model simulator results.  Left: observed and predicted
unmitigated=CDR crossing in the offline Aer/FakeManilaV2 map.  Right:
common-readout response, showing near-invariant CDR residuals and a nonzero PEC
readout component.}
\label{fig:phase11_device_model_summary}
\end{figure}

For the archived readout/twirling contrast, the cluster mean absolute CDR
twirled-minus-mitigated bias gap is $4.02\times10^{-4}$, while the corresponding
PEC raw-minus-mitigated and twirled-minus-mitigated gaps are $2.87\times10^{-2}$
and $2.64\times10^{-3}$.

\begin{figure}
\centering
\includegraphics[width=0.95\textwidth]{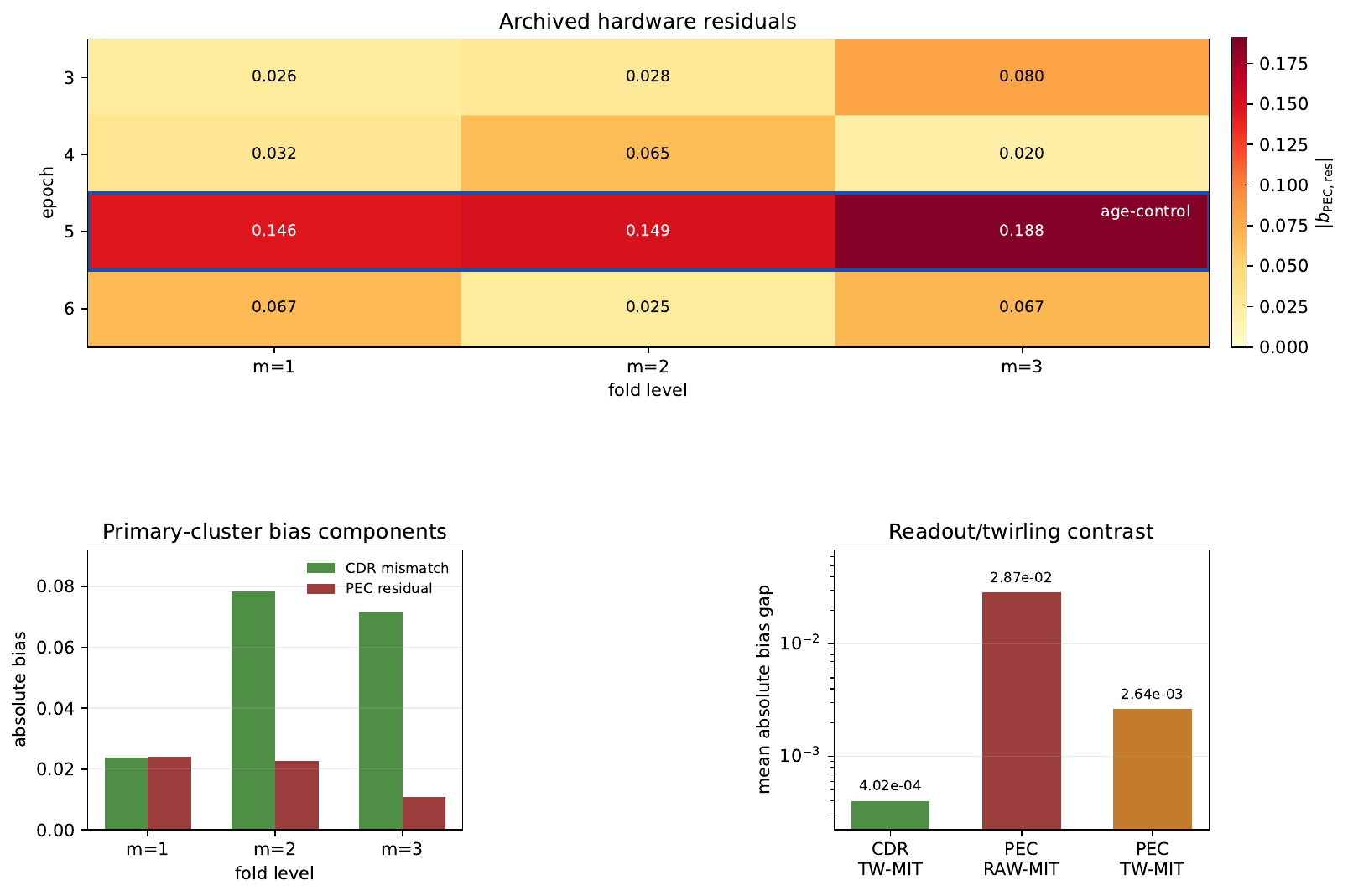}
\caption{Archived hardware-count analysis.  Top: mitigated PEC residual
magnitude across archived epochs and fold levels, with the age-control epoch
highlighted.  Bottom left: primary-cluster CDR mismatch and PEC residual
components.  Bottom right: cluster readout/twirling bias gaps.}
\label{fig:phase12_hardware_count_summary}
\end{figure}

\end{document}